\newtheorem*{observation*}{Observation}
\newtheorem*{openprob*}{Open Problem}
\newtheorem*{conjecture*}{Conjecture}
\newtheorem*{example*}{Example}
\newcommand{\SE}{\textsc{StarExp($k$)}}
\newcommand{\MSE}{\textsc{MaxStarExp($k$)}}
\newtheorem{theorem}{Theorem}
\newtheorem{corollary}{Corollary}
\newtheorem{definition}{Definition}
\newtheorem{lemma}{Lemma}
\begin{document}
	
\title{\vspace{-0.5cm}The temporal explorer who returns to the base\thanks{This work was partially supported 
		by NeST initiative of the School of EEE and CS at the University of Liverpool and by the EPSRC Grants 
		EP/P020372/1 and EP/P02002X/1.}}
\author{Eleni C.~Akrida\thanks{Department of Computer Science, University of Liverpool, Liverpool, UK. 
		Email: \texttt{e.akrida@liverpool.ac.uk}} 
	\and George B.~Mertzios\thanks{Department of Computer Science, Durham University, Durham, UK. 
		Email: \texttt{george.mertzios@durham.ac.uk}} 
	\and Paul G.~Spirakis\thanks{Department of Computer Science, University of Liverpool, Liverpool, UK, and 
		Department of Computer Engineering \& Informatics, University of Patras, Greece.
		Email: \texttt{p.spirakis@liverpool.ac.uk}}
	}
\date{\vspace{-1.0cm}}	
\maketitle

\begin{abstract}
In this paper we study the problem of exploring a temporal graph (i.e.~a graph that changes over time), 
in the fundamental case where the underlying static graph is a star. The aim of the exploration problem in a temporal star is to find a temporal walk which starts at the center of the star, visits all leafs, and eventually returns back to the center. 
We initiate a systematic study of the computational complexity of this problem, depending on the number $k$ of time-labels that every edge is allowed to have; that is, on the number $k$ of time points where every edge can be present in the graph.
To do so, we distinguish between the decision version \SE\, asking whether a complete exploration of the instance exists, and the maximization version \MSE\ of the problem, asking for an exploration schedule of the greatest possible number of edges in the star.
We present here a collection of results establishing the computational complexity of these two problems. 
On the one hand, we show that both \textsc{MaxStarExp(2)} and \textsc{StarExp(3)} can be efficiently solved 
in $O(n\log n)$ time on a temporal star with $n$ vertices. 
On the other hand, we show that, for every $k\geq 6$, 
\SE\ is NP-complete and \MSE\ is APX-hard, and thus it does not admit a PTAS, unless P~=~NP. 
The latter result is complemented by a polynomial-time 2-approximation algorithm for \MSE, for every $k$, 
thus proving that \MSE\ is APX-complete. 
Finally, we give a partial characterization of the classes of temporal stars with random labels 
which are, asymptotically almost surely, yes-instances and no-instances for \SE\, respectively.

\noindent\textbf{Keywords:} temporal exploration, star graph, APX-hard, approximation algorithm.
\end{abstract}

\section{Introduction and motivation}\label{sec:intro}

A temporal graph is, roughly speaking, a graph that changes over time. Several networks, both modern and traditional, including social networks, transportation networks, information and communication networks, can be modeled as temporal graphs. The common characteristic in all the above examples is that the network structure, i.e.~the underlying graph topology, is subject to discrete changes over time. Temporal graphs naturally model such time-varying networks using time-labels on the edges of a graph to indicate moments of existence of those edges, while the vertex set remains unchanged. This formalism originates in the foundational work of Kempe et al.~\cite{kempe}.

In this work, we focus in particular on temporal graphs where the underlying graph is a star graph and we consider the problem of exploring such a temporal graph starting and finishing at the center of the star. 
The motivation behind this is inspired from the well known Traveling Salesman Problem (TSP). The latter asks the following question: ``Given a list of cities and the distances between each pair of cities, what is the shortest possible route that visits each city and returns to the origin one?''. 
In other words, given an undirected graph with edge weights where vertices represent cities and edges represent the corresponding distances, find a minimum-cost Hamiltonian cycle. However, what happens when the traveling salesman has particular temporal constraints that need to be satisfied, 
e.g.~(s)he can only go from city $A$ to city $B$ on Mondays or Tuesdays, or when (s)he needs to take the train and, hence, schedule his/her visit based on the train timetables? 
In particular, consider a traveling salesman who, starting from his/her home town, has to visit $n-1$ other towns via train, always \emph{returning to their own home town} after visiting each city. There are trains between each town and the home town only on specific times/days, possibly different for different towns, and the salesman knows those times in advance. 
Can the salesman decide whether (s)he can visit all towns and return to the own  home town by a certain day?

\medskip

\textbf{Previous work.} 
Recent years have seen a growing interest in dynamic network studies. Due to its vast applicability in many areas, the notion of temporal graphs has been studied from different perspectives under various names such as time-varying~\cite{krizanc1,flocchiniMS09,TangMML10-ACM}, evolving~\cite{xuan,clementi,Ferreira-MANETS-04}, dynamic~\cite{GiakkoupisSS14}, and graphs over time~\cite{Leskovec-Kleinberg-Faloutsos07}; for a recent attempt to integrate existing models, concepts, and results from the distributed computing perspective see the survey papers~\cite{flocchini1,flocchini2,CasteigtsFloccini12} and the references therein.
Temporal data analytics, temporal flows, as well as various temporal analogues of known static graph concepts such as cliques, vertex covers, diameter, distance, connectivity and centrality have also been studied~\cite{gionis2,gionis1,akrida,akridaTOCS,akridaAlgosensors,akridaCIAC,akridaICALP,spirakis,neidermeier,viardClique,viardCliqueTCS}.

Notably, temporal graph exploration has been studied before~\cite{erlebach,michailTSP}; Erlebach et al.~\cite{erlebach} define the problem of computing a foremost exploration of all vertices in a temporal graph (\textsc{Texp}), without the requirement of returning to the starting vertex. They show that it is NP-hard to approximate \textsc{Texp} 
with ratio $O(n^{1-\varepsilon})$ for any $\varepsilon>0$, and give explicit construction of graphs that need $\Theta(n^2)$ steps for \textsc{Texp}. They also consider special classes of underlying graphs, such as the grid, as well as the case of random temporal graphs where edges appear in every step with independent probabilities. Michail and Spirakis~\cite{michailTSP} study a temporal analogue of TSP(1,2) where the objective is to explore the vertices of a complete directed temporal graph with edge weights from $\{1,2\}$ with the minimum total cost. 

We focus here on the exploration of temporal stars, inspired by the \textsc{Traveling Salesman} paradigm where the salesman returns to his base after visiting every city. The \textsc{Traveling Salesman Problem} is one of the most well-known combinatorial optimization problems, which still poses great challenges despite having been intensively studied for more than sixty years. For the \textsc{Symmetric TSP}, where the given graph is undirected (as is the case for the temporal version of the problem that we consider here) and the edge costs obey the triangle inequality, the best known approximation algorithm is still the celebrated $3/2$ of Christofides~\cite{Christofides76}, despite forty years of intensive efforts to improve it. Only recently, Gharan et al.~\cite{Gharan11} proved that the \textsc{Graphic TSP} special case where the costs correspond to shortest path distances of some given graph can be approximated within $3/2\varepsilon$, for a small constant $\varepsilon>0$ (which was further improved by subsequent works, e.g.~\cite{Sebo14}). For the \textsc{Asymmetric TSP} where paths may not exist in both directions or the distances might be different depending on the direction, the $O(\log{n})$-approximation of~\cite{Frieze82} was the best known for almost three decades, improved only recently to $O(\log{n}/ \log{\log{n}})$~\cite{Asadpour17}. Online TSP-related problems as well as versions of TSP where each node must be visited within a given time window, have also been recently studied~\cite{azar2016,paulsen2015}.

\medskip

\textbf{The model and definitions.}
It is generally accepted to describe a network topology using a graph, the vertices and edges of which represent the communicating entities and the communication opportunities between them, respectively. Unless otherwise stated, we denote by $n$ and $m$ the number of vertices and edges of the graph, respectively.
We consider graphs whose edge availabilities are described by sets of positive integers (labels), one set per edge.
\begin{definition}[Temporal Graph]
	Let $G=(V,E)$ be a graph. A temporal graph on $G$ is a pair $(G,L)$, where $L: E \to 2^{\mathbb{N}}$ is a time-labeling function, called a \emph{labeling} of $G$, which assigns to every edge of $G$ a set of discrete-time labels. The labels of an edge are the \emph{discrete time instances} at which it is available.
\end{definition}

More specifically, we focus on temporal graphs whose underlying graph is an undirected star, i.e.~a connected graph of $m=n-1$ edges which has $n-1$ leaves, i.e.~vertices of degree~$1$.
\begin{definition}[Temporal Star]
	A temporal star is a temporal graph $(G_s,L)$ on a star graph $G_s = (V,E)$. Henceforth, we denote by $c$ the center of $G_s$, i.e.~the vertex of degree $n-1$.
\end{definition}

\begin{definition}[Time edge]
	Let $e=\{u,v\}$ be an edge of the underlying graph of a temporal graph and consider a label $l\in L(e)$. The ordered triplet $(u,v,l)$ is called \emph{time edge}.\footnote{Note that an undirected edge $e=\{u,v\}$ is associated with $2\cdot |L(e)|$ time edges, namely both $(u,v,l)$ and $(v,u,l)$ for every $l\in L(e)$.}
\end{definition}

A basic assumption that we follow here is that when a message or an entity passes through an available link at time $t$, then it can pass through a subsequent link only at some time $t'>t$ and only at a time at which that link is available.

\begin{definition}[Journey]
	A \emph{temporal path} or \emph{journey} $j$ from a vertex $u$ to a vertex $v$ \emph{($(u, v)$-journey)} is a sequence of time edges $(u, u_1, l_1)$, $(u_1, u_2, l_2)$, $\ldots$ , $(u_{k-1}, v, l_k)$, such that $l_i < l_{i +1}$, for each $1 \leq i \leq k - 1$. We call the last time label, $l_k$, \emph{arrival time} of the journey.
\end{definition}

Given a temporal star $(G_s,L)$, on the one hand we investigate the complexity of deciding whether $G_s$ is \emph{explorable}: we say that $(G_s,L)$ is explorable if there is a journey starting and ending at the center of $G_s$ that visits every node of $G_s$. Equivalently, we say that there is an \emph{exploration} that \emph{visits} every node, and \emph{explores} every edge, of $G_s$. 
On the other hand, we investigate the complexity of computing an exploration schedule that visits the greatest number of edges. 
A (partial) exploration of a temporal star is a journey $J$ that starts and ends at the center of $G_s$ which visits some nodes of $G_s$; its size $|J|$ is the number of nodes of~$G_s$ that are visited by $J$. 
We, therefore, identify the following problems:

\vspace{0cm} \noindent \fbox{ 
	\begin{minipage}{0.96\textwidth}
		\begin{tabular*}{\textwidth}{@{\extracolsep{\fill}}lr} \SE \ & \\ \end{tabular*}
		
		\vspace{1.2mm}
		{\bf{Input:}}  A temporal star $(G_s,L)$ such that every edge has at most $k$ labels.\\
		{\bf{Question:}} Is $(G_s,L)$ explorable?
\end{minipage}} \vspace{0,3cm}

\vspace{0cm} \noindent \fbox{ 
	\begin{minipage}{0.96\textwidth}
		\begin{tabular*}{\textwidth}{@{\extracolsep{\fill}}lr} \MSE \ & \\ \end{tabular*}
		
		\vspace{1.2mm}
		{\bf{Input:}}  A temporal star $(G_s,L)$ such that every edge has at most $k$ labels.\\
		{\bf{Output:}} A (partial) exploration of $(G_s,L)$ of \emph{maximum} size.
\end{minipage}} \vspace{0,3cm}

Note that the case where one edge $e$ of the input temporal star has only one label is degenerate.
Indeed, in the decision variant (i.e.~\SE) we can immediately conclude that $(G_s,L)$ is a no-instance as this edge cannot be explored; similarly, in the maximization version (i.e.~\MSE) we can just ignore edge $e$ for the same reason. 
We say that we ``enter'' an edge $e=\{c,v\}$ of $(G_s,L)$ when we cross the edge from $c$ to $v$ at a time on which the edge is available. We say that we ``exit'' $e$ when we cross it from $v$ to $c$ at a time on which the edge is available.
Without loss of generality we can assume that, in an exploration of $(G_s,L)$, the entry to any edge $e$ is followed by the exit from $e$ at the earliest possible time. That is, if the labels of an edge $e$ are 
$l_1, l_2, \ldots, l_k$ and we enter $e$ at time $l_i$, we exit at time $l_{i+1}$. The reason is that, waiting at a leaf (instead of exiting as soon as possible) does not help in exploring more edges; 
we are better off returning to the center $c$ as soon as possible.

\medskip

\textbf{Our contribution.}
In this paper we initiate a systematic study of the computational complexity landscape of the temporal star exploration problems \SE\ and \MSE, depending on the maximum number $k$ of labels allowed per edge. 
As a warm-up, we first prove in Section~\ref{sec:2labels} that the maximization problem~\textsc{MaxStarExp(2)}, 
i.e.~when every edge has two labels per edge, can be efficiently solved in $O(n\log n)$ time; 
sorting the labels of the edges is the dominant part in the running time.
When every edge is allowed to have up to three labels, the situation becomes more interesting. 
The decision problem \textsc{StarExp(3)} can be easily reduced to an equivalent \textsc{2SAT} instance 
with $O(n^2)$ clauses. In Section~\ref{sec:3labels}, we prove that \textsc{StarExp(3)} can be solved in $O(n\log n)$ time. To do so, we provide a more sophisticated algorithm that, given an instance of 
temporal star exploration with $n$ vertices, constructs an equivalent \textsc{2SAT} instance with $O(n)$ 
clauses. Unfortunately, this approach does not extend to the maximization problem \textsc{MaxStarExp(3)}, whose time complexity remains open.

In Section~\ref{sec:L-reduction} we prove that, for every $k\geq 6$, 
the decision problem \SE\ is NP-complete and the maximization problem \MSE\ is APX-hard, and thus it does not admit a Polynomial-Time Approximation Scheme (PTAS), unless P~=~NP. This is proved by a reduction from a special case of \textsc{3SAT}, namely \textsc{3SAT(3)}, where every variable appears in at most three clauses. 
We complement these hardness results by providing, for every~$k$, a greedy 2-approximation algorithm for \MSE\ in Section~\ref{greedy-sec}, thus proving that \MSE\ is APX-complete for $k\geq 6$.
Finally, in Section~\ref{sec:random} we study the problem of exploring a temporal star whose edges have $k$ random labels (chosen uniformly at random within an interval $[1,\alpha]$, for some $\alpha\in\mathbb{N}$). 
We partially characterize the classes of temporal stars which, asymptotically almost surely, admit a complete (resp.~admit no complete) exploration.

\section{Efficient optimization algorithm for two labels per edge}\label{sec:2labels}

In this section we show that, when every edge has two labels, a maximum size exploration in $(G_s,L)$ can be efficiently solved in $O(n \log{n})$ time. Thus, clearly, the decision variation of the problem, i.e.~\textsc{StarExp(2)}, can also be solved within the same time bound.

\begin{theorem}
	\textsc{MaxStarExp(2)} can be solved in $O(n \log{n})$ time.
\end{theorem}
\begin{proof}
	We show that \textsc{MaxStarExp(2)} is reducible to the Interval Scheduling Maximization Problem (ISMP). 
	
	\vspace{0cm} \noindent \fbox{ 
		\begin{minipage}{0.96\textwidth}
			\begin{tabular*}{\textwidth}{@{\extracolsep{\fill}}lr} Interval Scheduling Maximization Problem \  (ISMP)& \\ \end{tabular*}
			
			\vspace{1.2mm}
			{\bf{Input:}}  A set of intervals, each with a start and a finish time.\\
			{\bf{Output:}} Find a set of non-overlapping intervals of maximum size.
	\end{minipage}} \vspace{0,3cm}
	
	Every edge $e$ of $(G_s,L)$ with labels $l_e<l_e^*$ can be viewed as an interval to be scheduled that has start time $l_e$ and finish time $l_e^*+0.5$; indeed, exiting $e$ using label $l_e^*$ means that we return to the centre of $G_s$ at the ``end'' of day $l_e^*$ and, thus, can subsequently explore other edges only at days \emph{after} $l_e^*$. So, to avoid scheduling/exploring an edge $e'$ that also has a label equal to $l_e^*$ using that label, we must add a positive number $\alpha \in (0,1)$, e.g.~$\alpha = 0.5$, to the label $l_e^*$ of $e$ as well as to the largest label of every other edge.
	
	So, given $(G_s,L)$ we construct a set of $n-1$ intervals as follows: for every edge $e\in E$, we create an interval $I_e$ with start time $l_e$ and finish time $l_e^*+0.5$. We say that two edges are conflicting when their corresponding intervals are overlapping. Clearly, any (partial) exploration of $(G_s,L)$ corresponds to a set of non-overlapping intervals of the same size as the exploration, and vice versa. 
	
	The following greedy algorithm finds an optimal solution for ISMP~\cite{algorithm_design} and can therefore find the optimal solution for \textsc{MaxStarExp(2)}:
	\begin{enumerate}
		\item Start with the set $S=E$ of all edges. Select the edge, $e$, with the smallest largest label (equivalent to the earliest finish time or the corresponding interval).
		\item Remove from $S$ the edge $e$ and all conflicting edges.
		\item Repeat until $S$ is empty.
	\end{enumerate}
	
	The above works in $(|E|\log{|E|}) = O(n \log{n})$ time.
\end{proof}

\section{Efficiently deciding exploration with three labels per edge}\label{sec:3labels}

In this section we show that, when every edge has up to three labels, the decision problem of whether a 
complete exploration of the temporal star exists (i.e.~\textsc{StarExp(3)}), can be efficiently solved in $O(n\log n)$ time.

Before we present our $O(n\log n)$-time algorithm, we first outline here an easy $O(n^2)$-time algorithm that decides~\textsc{StarExp(3)}. 
To this end, first note that we can easily deal with all edges $e$ that have exactly two labels; in this case, $e$ must be explored by entering at the smallest and leaving at the largest label. Thus the instance can be reduced to a smaller one, with only edges with three labels, by removing all labels from other edges which are conflicting with the exploration of~$e$.\footnote{Assume that the two labels of the edge $e$ are $l_1$ and $l_2$, where $l_1<l_2$. 
Then, a label $l'$ of another edge $e'$ is conflicting with the exploration of $e$ if $l_1\leq l' \leq l_2$. If, after removing all labels of other edges which are conflicting with the exploration of $e$, an edge $e'$ remains with only one label, or with two labels $l'_1,l'_2$ such that $l'_1<l_1<l_2<l'_2$, then the exploration of both $e$ and $e'$ is not possible, and thus the instance is a no-instance.} 
Furthermore, as mentioned above, we can assume without loss of generality that, in an exploration of $(G_s,L)$, the entry to any edge $e$ is followed by the exit from $e$ at the earliest possible time. 
We now reduce the problem to \textsc{2SAT} as follows. For every edge $e_i$ with labels $l_{i,1} < l_{i,2} < l_{i,3}$, we define the two possible exploration windows for this edge, namely $[l_{i,1}, l_{i,2}]$ and $[l_{i,2} , l_{i,3}]$. Furthermore we assign to edge $e_i$ a Boolean variable $x_i$ such that the truth assignment $x_i=0$ (resp.~$x_i=1$) means that edge $e_i$ is explored in the interval $[l_{i,1}, l_{i,2}]$ (resp.~$[l_{i,2} , l_{i,3}]$).
Using these variables, we create a number of 2-clauses as follows. 
For any two edges $e_i$ and $e_j$, if the exploration of $e_i$ using its first (resp.~second) exploration window is conflicting with the exploration of $e_j$ using its first (resp.~second) exploration window, we add the clause $({x_i} \vee {x_j})$ (resp.~$(\neg{x_i} \vee \neg{x_j})$). 
Similarly, if the exploration of $e_i$ using its second (resp.~first) exploration window is conflicting with the exploration of $e_j$ using its first (resp.~second) exploration window, 
we add the clause $(\neg{x_i} \vee {x_j})$ (resp.~$({x_i} \vee \neg{x_j})$). 
The constructed 2-CNF formula is satisfiable if and only if $(G_s,L)$ is explorable. Furthermore this formula contains $O(n^2)$ clauses in total, and thus the exploration problem can be solved in $O(n^2)$ time using a linear-time algorithm for \textsc{2SAT}~\cite{even}.

In the next theorem we prove that \textsc{StarExp(3)} can be reduced to \textsc{2SAT} such that the number of clauses in the constructed formula is linear in $n$. For simplicity of the presentation, we assume in the next theorem that all labels in the input are different; we later prove in Corollary~\ref{decision-3-labels-general-cor} that this assumption can be actually removed, thus implying an $O(n\log{n})$-time algorithm on general input instances with at most three labels per edge.

\begin{theorem}
	\textsc{StarExp(3)} can be solved in $O(n\log{n})$ time on instances with \emph{distinct} labels.
\end{theorem}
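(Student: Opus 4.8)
The plan is to refine the $O(n^2)$-clause reduction to \textsc{2SAT} sketched above into one that emits only a linear number of clauses, and then to invoke the linear-time \textsc{2SAT} algorithm of~\cite{even}; since sorting the $3(n-1)$ distinct labels costs $O(n\log n)$ and dominates everything else, this gives the claimed bound. First I would keep the same Boolean variables as before: after the preprocessing that resolves every edge with exactly two labels, each remaining edge $e_i$ has distinct labels $l_{i,1}<l_{i,2}<l_{i,3}$ and a variable $x_i$ selecting its exploration window $A_i=[l_{i,1},l_{i,2}]$ (when $x_i=0$) or $B_i=[l_{i,2},l_{i,3}]$ (when $x_i=1$); a complete exploration corresponds exactly to a choice of one window per edge whose intervals are pairwise disjoint. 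The reason the naive reduction is quadratic is that two windows conflict precisely when their intervals overlap, and there can be $\Theta(n^2)$ overlapping pairs, so we cannot afford one clause per conflicting pair.

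To avoid this, I would sort all labels and sweep them from left to right, introducing $O(n)$ auxiliary Boolean variables, one per gap between consecutive labels, whose intended meaning is ``the explorer is currently away from the centre $c$''. By the distinctness assumption each label is an endpoint of a unique edge, and under a fixed assignment it acts as an entry (a departure from $c$), an exit (a return to $c$), or is unused; in particular $l_{i,2}$ is an exit when $x_i=0$ and an entry when $x_i=1$. The two dynamical rules, namely ``one may depart from $c$ only while located at $c$'' and ``the away-status persists between a departure and its matching return'', together encode the single global invariant that at most one edge is being explored at any moment, which is exactly pairwise disjointness of the chosen windows. Since only $O(1)$ events are attached to each edge, this produces $O(n)$ clauses, after which I would prove the two directions: every satisfying assignment yields a consistent departure/return schedule and hence a complete exploration, and conversely every exploration induces a satisfying assignment of the auxiliary variables.

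The step I expect to be the main obstacle is keeping the construction a genuine \emph{2}-CNF, so that the linear-time \textsc{2SAT} solver actually applies. The safety constraints (``do not depart while already away'') and the forcing of the away-bit at departures and returns are naturally binary, but faithfully \emph{carrying} the away-status across a label that is unused under the current truth assignment — for instance across $l_{i,3}$ when $x_i=0$ — seems to demand a ternary clause of the form $(x_i\vee\neg a_{k-1}\vee a_k)$, which is not permitted in \textsc{2SAT}. The crux is therefore to replace this biconditional bookkeeping by a ladder-style (sequential) encoding of the ``at most one edge explored at a time'' invariant that uses only binary clauses together with the auxiliary variables, and to argue that these weaker, one-directional constraints nevertheless pin down the away-status correctly on every satisfying assignment.

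Finally I would remove the distinct-labels hypothesis to obtain Corollary~\ref{decision-3-labels-general-cor}, by perturbing equal labels with a fixed half-integer offset, consistently with the tie-breaking already used in the two-label case, so that the reduction and its $O(n\log n)$ running time carry over to arbitrary instances with at most three labels per edge.
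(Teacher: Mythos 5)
Your reduction-to-\textsc{2SAT} framing and the overall plan (sort the labels, sweep, emit only $O(n)$ clauses, then run the linear-time \textsc{2SAT} algorithm) match the paper's strategy, but your proposal stalls at exactly the step you yourself flag as the crux, and that step is a genuine gap, not a routine technicality. Your auxiliary ``away-from-$c$'' variables require a conditional frame axiom --- ``if label $l_{i,3}$ is unused (i.e.\ $x_i=0$) then the away-status propagates across it'' --- which, as you note, is a ternary clause $(x_i\vee\neg a_{k-1}\vee a_k)$. There is no evident way to weaken this to one-directional binary constraints: a monotone ladder (clauses of the form $\neg a_{k-1}\vee a_k$, as in sequential at-most-one encodings) makes the away-status permanent, which wrongly forbids exploring any edge after the first return to the centre; and encoding, for each gap, an at-most-one constraint over the windows covering that gap costs a number of clauses proportional to the total gap--window incidences, which is $\Theta(n^2)$ in the worst case (e.g.\ when all first windows pairwise overlap). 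So the proposal as written does not establish the $O(n)$ bound on the formula size, which is the entire content of the theorem beyond the quadratic algorithm already sketched in the paper.

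The paper closes this gap by a different mechanism that avoids auxiliary variables altogether: it exploits the fact that heavy overlap \emph{forces} variables. In a first sweep it fixes $x_{e_2}=1$ (resp.\ $x_{e_2}=0$) whenever some middle label $b_{e_1}$ falls in $I_{e_2}^1$ (resp.\ $I_{e_2}^2$), since the opposite choice would make $e_1$ unexplorable; it then checks the forced windows pairwise for consistency by an interval scan, and handles fixed/non-fixed interactions by further forcing (unit clauses). The payoff is a structural fact about what remains: among the non-fixed edges, during the sweep the set of open edges whose middle label has not yet been passed always has size at most one (otherwise one of them would have been fixed), so only $O(n)$ genuine binary clauses of the form $(\neg x_e\vee x_{e'})$ are ever generated. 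If you want to rescue your encoding, you would need an analogous preprocessing argument showing that after forcing, the residual overlap structure is sparse enough for your ladder to be expressible in binary clauses --- at which point you have essentially reproduced the paper's argument and the auxiliary variables buy you nothing. As a minor separate point, your uniform half-integer perturbation for the non-distinct case is too coarse: the paper's Corollary~\ref{decision-3-labels-general-cor} must move coinciding labels in case-dependent directions, and the case $b_e=b_{e'}$ cannot be perturbed at all, since it already implies non-explorability.
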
\label{thm:lognStarExp3}
\begin{proof}
	Consider an instance $(G_s,L)$ of \textsc{StarExp(3)} with distinct labels. To prove the statement, we consider a variable $x_e$ for each $e\in E$. Setting $x_e=0$ will be associated with exploring $e$ using its first exploration window, and setting $x_e=1$ will be associated with exploring $e$ using its second exploration window. We will show below how to construct a 2-SAT formula, $F$, of size linear to the input, which is satisfiable if and only if $(G_s,L)$ is explorable. Let $a_e, b_e,c_e$ be the three labels of $e$, and let $I_e$ denote the time interval $[a_e,c_e]$. Also, let $I_e^1$ denote the first exploration window, $[a_e,b_e]$, of $e$, and $I_e^2$ denote the second exploration window, $[b_e,c_e]$, of $e$.
	
	First, notice that for any two edges $e_1$ and $e_2$, there are only $3$ cases up to renaming:
	\begin{enumerate}
		\item\label{item:a} $b_{e_1} \in I_{e_2}$. Here, there are two sub-cases:
		\begin{enumerate}
			\item\label{item:a1} $b_{e_1} \in I_{e_2}^1$. See, for example, Figure~\ref{fig:case11}.
			\item\label{item:a2} $b_{e_1} \in I_{e_2}^2$. See, for example, Figure~\ref{fig:case12}.
		\end{enumerate}
		\item\label{item:b} $c_{e_1} \in I_{e_2}^1$. See, for example, Figure~\ref{fig:case2}. Here, we do not consider the case $c_{e_1} \in I_{e_2}^2$, as in that case we would have $b_{e_2} \in I_{e_1}$ which reduces to case~\ref{item:a}.
		\item $I_{e_1}$ and $I_{e_2}$ do not overlap.
	\end{enumerate}
	\begin{figure}[h]
		\centering
		\subfloat[First case (1a) where the middle label of an edge lies within the exploration window of another edge.\label{fig:case11}]{%
			\includegraphics[width=0.26\linewidth]{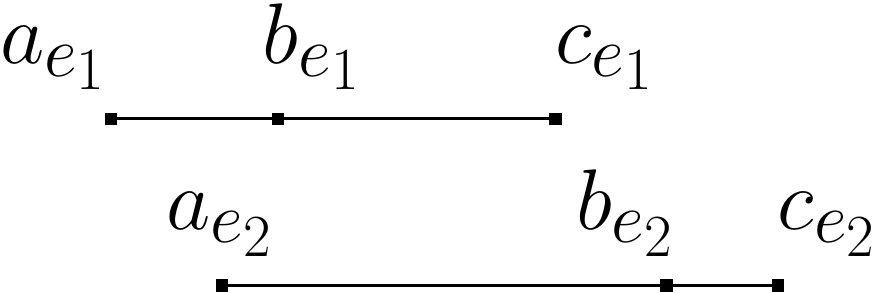}
		}
		\hfill
		\subfloat[Second case (1b) where the middle label of an edge lies within the exploration window of another edge.\label{fig:case12}]{%
			\includegraphics[width=0.26\linewidth]{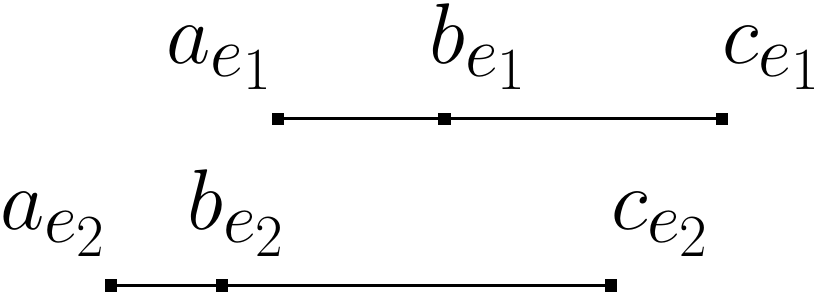}
		}
		\hfill
		\subfloat[Case (2) where the largest label of an edge lies within the first exploration window of another edge.\label{fig:case2}]{%
			\includegraphics[width=0.26\linewidth]{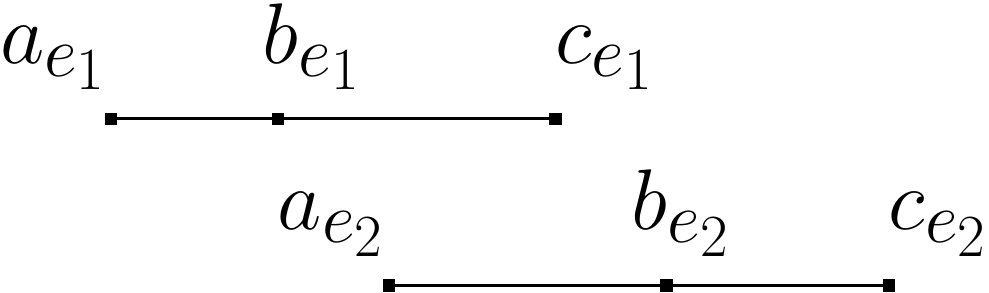}
		}
		\caption{Cases regarding the positioning of the exploration windows of any two edges.}
	\end{figure}

	Note that in case~\ref{item:a1}, it is impossible to explore $e_2$ using its first window, as that would disallow using any exploration window for $e_1$. Therefore, the value $x_{e_2} = 1$ is forced. Similarly, in case~\ref{item:a2}, exploring $e_2$ using its second window would disallow exploring $e_1$ at all. So, the value $x_{e_2}=0$ is forced. 
	Note also that in case~\ref{item:b}, setting $x_{e_1}=1$ would imply $x_{e_2}=1$, while setting $x_{e_1}=0$ allows for $x_{e_2}$ to be set to either $0$ or $1$. So, in the construction of the formula $F$ to be satisfied, we shall add the clause $(\neg{x_{e_1}} \vee x_{e_2})$. 
We proceed with constructing $F$.
	
	\begin{description}
		\item[Step 1.] We sort the $3n$ labels of $(G_s,L)$ in ascending order and we visit them from left to right. We say that an edge $e$ is ``open'' if we have passed through $a_e$ but not through $c_e$, in the increasing order of labels. Now we create two arrays $A$, $B$. 
Array $A$ shall hold those open edges $e$ , whose middle label $b_e$ we have not reached yet, and $B$ shall hold those open edges $e$, whose $b_e$ we have passed. 
				We start with the first label in the order, adding the corresponding edge to $A$. We move on to the next label. If a label we encounter is the first label, $a_e$, of some edge $e$, we merely add $e$ to $A$ and move on to the next label in the order. If a label we encounter is the second label, $b_e$, of some edge $e$, then we:
		\begin{enumerate}
			\item Remove $e$ from $A$.
			\item For every $e'\in A$, we set $x_{e'}=1$; for every $e''\in B$, we set $x_{e''}=0$.
			\item Add $e$ to $B$.
		\end{enumerate}
		Then, we move on to the next label. If a label we encounter is the third label, $c_e$, of some edge $e$, then we remove $e$ from $B$ and move on to the next label.
				If at any point in the above process we set some variable $x_e = 1$ (resp.~$x_e=0$) that was previously set to $0$ (resp.~$1$) then we stop and decide that $(G_s,L)$ is not explorable. Otherwise, we proceed to step 2.
				Clearly, the running time required for this step is dominated by the time needed to sort the labels: $O(n\log{n})$.
		\item[Step 2.] Step 1 sets the values of the $x$-variables of all edges $e$, whose second label lies within the time interval $I_{e'}$ of some other edge $e'$. At this point, we have some edges whose corresponding $x$-variables have been fixed (i.e.~forced) to some truth value (\emph{fixed} edges) and some edges whose corresponding $x$-variables have not been fixed (\emph{non-fixed} edges). 
		
		Within Step 2, we check whether any forced truth value (from Step 1) for a variable is conflicting with the forced truth value of another variable. To do so, we only consider two of the three labels for each edge $e$ whose variable $x_e$ has been forced within Step~1. Namely, if $x_e=0$ (resp.~$x_e=1$), we only consider the labels $a_e,b_e$ (resp.~$b_e,c_e$) of edge $e$ and we ignore label $c_e$ (resp.~$a_e$). 
		For each such edge $e$ (i.e.~for each edge whose truth value has been forced within Step 1) 
		these two ``considered'' labels form an interval (i.e.~the interval $I_e^1$ if $x_e=0$ and the interval $I_e^2$ if $x_e=1$. 
Now, in $O(n)$ time we can scan all these intervals (of the edges whose truth value has been forced within Step 1) and in the same time we can check whether any pair of them overlaps. If so, we stop  and decide that $(G_s,L)$ is not explorable. Otherwise, we proceed to step 3.

		\item[Step 3.] Within Step 3, we only deal with non-fixed edges. We create arrays $A$ and $B$ on these edges, as in Step 1. We visit the labels of the non-fixed edges in increasing order, i.e.~from left to right. We start with the first label, adding the corresponding edge to $A$, and move on to the next label. If a label we encounter is the second label, $b_e$, of some edge $e$, then we move $e$ from $A$ to $B$, and move on to the next label. If a label we encounter is the third label, $c_e$, of some edge $e$, then we:
		\begin{enumerate}
			\item Remove $e$ from $B$.
			\item Check whether there is an $e'$ currently in $A$. If so, we add to $F$ the clause $(\neg{x_e} \vee x_{e'})$. Notice that $A$ always contains at most one edge; otherwise the middle label of one of the edges currently in $A$ would lie within the exploration window of another edge in $A$, and thus one of them would be a fixed edge, contradiction.
			\item Move on to the next label.
		\end{enumerate}
		
		Notice that in the above procedure, for every edge $e$ we may add a clause containing $x_e$ or $\neg{x_e}$ to $F$ at most twice. Therefore, the total number of clauses in $F$ is~$O(n)$.
		
		\item[Step 4.] Within Step 4, we only deal with pairs of one fixed and one non-fixed edge.
For each fixed edge $e_i$, only consider its forced exploration window, i.e.~the window $I_{e_i}^1$ if $x_{e_i}=0$ and the window $I_{e_i}^2$ if $x_{e_i}=1$. 
Note that the forced exploration window of any fixed edge $e'$ cannot contain the middle label $b_e$ of any non-fixed edge $e$ (see Figure~\ref{fig:fixed_non_fixed_1}). 
Indeed, otherwise we would set both $x_{e'}=0$ and $x_{e'}=1$ within Step 1, and thus we would have already decided that $(G_s,L)$ is not explorable. 
		\begin{figure}[h]
			\centering
			\subfloat[The middle label $b_e$ of a non-fixed edge $e$ cannot lie within the forced exploration window of a fixed edge.\label{fig:fixed_non_fixed_1}]{%
				\includegraphics[width=0.26\linewidth]{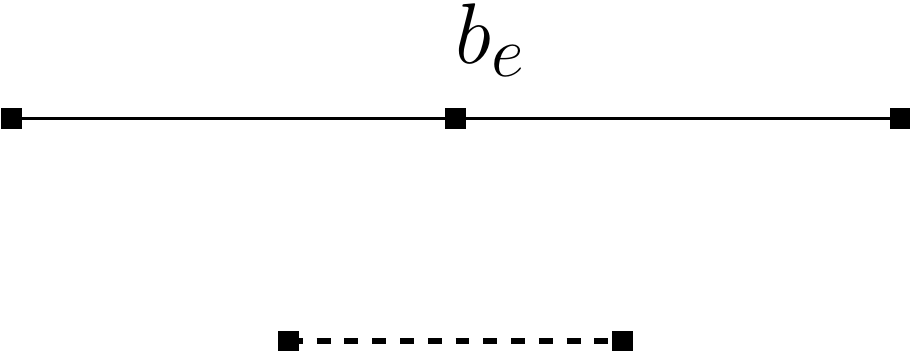}
			}
			\hfill
			\subfloat[The middle label $b_e$ of a non-fixed edge $e$ is to the ``left'' of the forced exploration window of a fixed edge.\label{fig:fixed_non_fixed_2}]{%
				\includegraphics[width=0.26\linewidth]{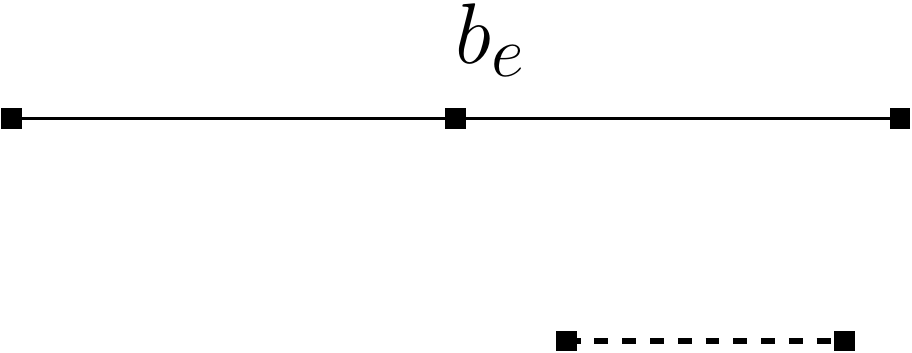}
			}
			\hfill
			\subfloat[The middle label $b_e$ of a non-fixed edge $e$ is to the ``right'' of the forced exploration window of a fixed edge.\label{fig:fixed_non_fixed_3}]{%
				\includegraphics[width=0.26\linewidth]{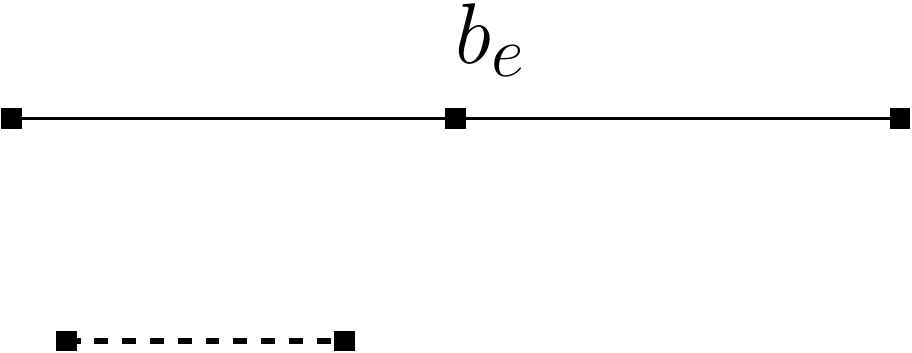}
			}
		\caption{Cases regarding the positioning of the exploration windows of pairs of one non-fixed edge and one fixed edge. In this figure, the exploration windows of non-fixed edges are drawn with solid lines, 
while the forced exploration windows of fixed edges are drawn with dashed lines.}
		\end{figure}
		
		So, it either is the case that $b_e$ is smaller than the smallest label of the forced window of the fixed edge (see Figure~\ref{fig:fixed_non_fixed_2}), or $b_e$ is larger than the largest label of the forced window of the fixed edge (see Figure~\ref{fig:fixed_non_fixed_3}). In the first case, we are forced to set $x_e=0$, while in the second case we are forced to set $x_e=1$.
		
		Now consider, in increasing order, all three labels for every non-fixed edge and the two labels of the forced window of every fixed edge. We scan through the sorted list of all these labels twice, as follows:
		
		\begin{enumerate}[(1)]
			\item In the first scan, we go through the labels to detect whether the label $a_e$ of a non-fixed edge $e$ lies within the forced exploration window of some fixed edge. Note here that there cannot be any two non-fixed edges $e_1,e_2$ whose labels $a_{e_1},a_{e_2}$ lie within the forced exploration window of the same fixed edge, since otherwise one of $e_1,e_2$ would have been classified as a fixed edge in Step 1, contradiction. 
Furthermore note that, if there is a non-fixed edge $e$ such that its label $a_e$ lies within the forced exploration window of a forced edge $e'$, then the middle label $b_e$ of $e$ lies to the right of the forced exploration window of $e'$; otherwise, a conflict would have been detected in Step 1.\\
			If the label $a_e$ of a non-fixed edge $e$ lies within the forced exploration window of a fixed edge, then we set $x_e=1$.
			\item In the second scan, we go through the same set of labels to detect whether the forced exploration window of a fixed edge starts within the interval $I_e$ of a non-fixed edge~$e$. 
			Let $e_1,e_2$ be two non-fixed edges and let $e'$ be a fixed edge. Note that the forced exploration window of $e'$ cannot start within the intersection $I_{e_1}\cap I_{e_2}$; indeed, otherwise either one of $e_1,e_2$ would then have been classified as a fixed edge in Step 1, or there would have been a conflict detected in Step 1.\\
			Now let $e$ be a non-fixed edge and $e'$ be a fixed edge. Note that, if the forced exploration window of $e'$ starts within the interval $I_e$, then the forced exploration window of $e'$ finishes \emph{after} the interval $I_e$ finishes; indeed, otherwise either a conflict would have been detected in Step 1 or the $e$ would have been classified as a fixed edge in Step 1.\\
			If the forced exploration window of a fixed edge $e'$ starts within the interval $I_e$ of a non-fixed edge $e$, then we check whether $x_e$ was set to $1$ in the first scan of the labels. If so, then we stop  and decide that $(G_s,L)$ is not explorable. Otherwise, we set $x_e=0$.
		\end{enumerate}
		Now, for every edge $e$ whose $x_e$ was set to $1$ we add to $F$ the clause $(x_e)$; for every edge $e$ whose $x_e$ was set to $0$ we add to $F$ the clause $(\neg{x_e})$. Notice that $F$ still contains $O(n)$ clauses in total.
		
		\item[Step 5.] We answer that $(G_s,L)$ is explorable if the formula $F$ is satisfiable and that $(G_s,L)$ is not explorable, otherwise.
	\end{description}
	
	The above procedure solves \textsc{StarExp(3)} in $O(n\log{n})$ time for instances $(G_s,L)$ with \emph{distinct} labels. 
\end{proof}

\begin{corollary}
\label{decision-3-labels-general-cor}
	\textsc{StarExp(3)} can be solved in $O(n\log{n})$ time on arbitrary instances.
\end{corollary}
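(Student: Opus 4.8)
The plan is to reduce an arbitrary instance to one with \emph{distinct} labels and then invoke the preceding theorem. First I would fix a total order on the $3n$ labels that refines their numerical order and breaks every tie by the \emph{type} of the label: at a common value $t$, all first labels~$a$ come before all second (middle) labels~$b$, which in turn come before all third labels~$c$, with ties inside a type resolved by edge index. Sorting under this comparator costs $O(n\log n)$, and it is equivalent to replacing each label by an infinitesimally perturbed distinct real in which an entry is pushed slightly before an exit that shares the same integer value.

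Next I would argue that, with one exception treated below, this perturbation preserves the conflict relation between exploration windows, so that the perturbed instance is explorable if and only if the original one is. The key observation is that two chosen windows are compatible (schedulable in a strictly increasing journey) exactly when their \emph{closed} integer intervals are disjoint, and they conflict precisely when these closed intervals share a point. A genuine integer gap of at least one is preserved as a strict separation by an infinitesimal perturbation, while a pair that merely touches, where an exit value equals an entry value at some $t$, stays overlapping because entries are pushed before exits; hence conflicts are preserved. I would verify this on the finitely many coincidence patterns $aa$, $ab$, $ac$, $bc$, $cc$, checking in each case that the ``$a<b<c$'' rule, together with the fact that an $a$ is only ever an entry and a $c$ only ever an exit, never forces two contradictory orderings.

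The hard part is the single pattern the rule cannot accommodate: two distinct edges $e,e'$ with the \emph{same} middle label $b_e=b_{e'}=t$. Because a middle label plays the double role of an exit for the first window $[a_e,b_e]$ and an entry for the second window $[b_e,c_e]$, pairing the first window of $e$ with the second window of $e'$ forces $\widetilde{b_e}$ after $\widetilde{b_{e'}}$, while the symmetric pairing forces the opposite, so no single perturbation can satisfy both. I would dispatch this case by a direct combinatorial argument: when $b_e=b_{e'}=t$, each of the four window combinations produces closed intervals that intersect at $t$, so all four pairings conflict and $e,e'$ cannot both be explored; since a complete exploration must use every edge, the instance is immediately a no-instance. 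I would therefore prepend a preprocessing step that, while sorting, tests whether any two edges share a middle label and outputs ``not explorable'' if so.

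Finally I would assemble the algorithm: sort with the type-based comparator and test for a repeated middle label in $O(n\log n)$ time; if none repeats, run the distinct-label procedure of the preceding theorem verbatim, using this total order in place of the strict numerical order. Its correctness on the now genuinely distinct perturbed instance is exactly that theorem, and the conflict-preservation argument guarantees the same answer as the original instance, yielding an $O(n\log n)$-time decision procedure for \textsc{StarExp(3)} on arbitrary inputs.
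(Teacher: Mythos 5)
Your proposal is correct, and at the top level it is the same reduction as the paper's: make the labels distinct by a small (symbolic) perturbation, declare the instance unexplorable when two edges share their middle label (all four window pairings then meet at that common value), and invoke Theorem~\ref{thm:lognStarExp3}. The genuine difference lies in how the ties are broken, and there your version is the more careful one. The paper resolves each coincidence pattern by a separate ad hoc movement, and its Case~2 moves the label in the wrong direction: for $c_e=b_{e'}$ it shifts $b_{e'}$ to the \emph{right}, which erases the touching conflict between the windows $[b_e,c_e]$ and $[b_{e'},c_{e'}]$ and can flip a no-instance into a yes-instance. Concretely, take three edges with label sets $\{5,11,35\}$, $\{10,12,20\}$, $\{11,20,30\}$: each of the eight window selections contains a conflicting pair (in particular $[12,20]$ and $[20,30]$ meet at $20$), so the instance is not explorable; yet after replacing the middle label $20$ of the third edge by $20+\epsilon$, the selection $[5,11]$, $[12,20]$, $[20+\epsilon,30]$ becomes a valid exploration. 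Your comparator ($a$-labels before $b$-labels before $c$-labels at equal values) instead guarantees that every exit stays after every entry sharing its value, which is exactly the condition under which the conflict relation is preserved; for this pattern it yields $\widetilde{b_{e'}}<\widetilde{c_e}$, the opposite of the paper's movement and the correct choice. Your single global rule also disposes of simultaneous coincidences (e.g.~$a_{e''}=b_{e'}=c_e$) consistently in one pass, something the paper's sequential case list leaves implicit. In short: same approach, but your explicit conflict-preservation criterion both justifies the tie-breaking and repairs a flaw in the paper's own case analysis.
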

\begin{proof}
	One can easily reduce the case of instances of \textsc{StarExp(3)}, where there exist $e,e' \in E$ with some $l \in L(e) \cap L(e')$, to the case of instances with distinct labels (which was dealt with in Theorem~\ref{thm:lognStarExp3}):	
	\begin{description}
		\item[Case 1.] If $c_e = a_{e'}$ then, equivalently, slightly move $c_e$ to the right.
		\item[Case 2.] If $c_e = b_{e'}$ then, equivalently, slightly move $b_{e'}$ to the right.	
		\item[Case 3.] If $c_e = c_{e'}$ then, equivalently, slightly move $c_e$ to the left.
		\item[Case 4.] If $a_e = a_{e'}$ then, equivalently, slightly move $a_e$ to the right.
		\item[Case 5.] If $a_e = b_{e'}$ then, equivalently, slightly move $a_{e}$ to the left.
		\item[Case 6.] If $b_e = b_{e'}$ then there is no exploration.
	\end{description}
It is easy to check that the above slight movements of labels do not affect the explorability of the input instance. Therefore, as all these modifications can be performed in $O(n)$ time, th statement of the corollary follows by Theorem~\ref{thm:lognStarExp3}.
\end{proof}

\begin{figure}[h]
	\centering
	\includegraphics[width=0.23\textwidth]{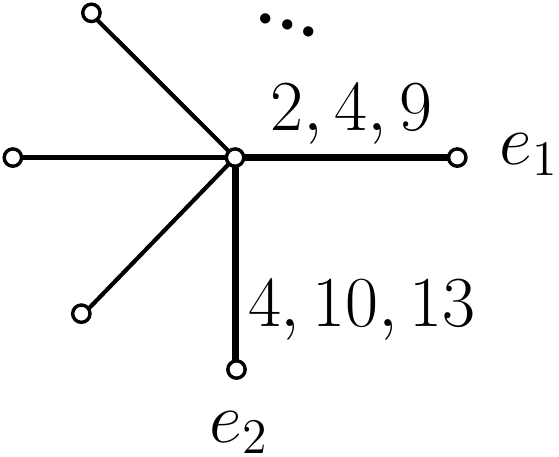}
	\caption{Edges $e_1$ and $e_2$ share label $4$. If we consider task $T_1$ associated with $e_1$ to have possible scheduling intervals $(2,4)$ and $(4,9)$, and task $T_2$ associated with $e_2$ to have possible scheduling intervals $(4,10)$ and $(10,13)$, then a feasible schedule would execute $T_1$ during $(2,4)$ and $T_2$ during $(4,10)$. However, this does not correspond to an exploration of the edges $e_1$ and $e_2$, since exiting $e_1$ with label $4$ would disallow entry to $e_2$ with the same label (due to our definition of journeys)! Therefore, we add a small positive number, e.g.~$\varepsilon = 0.5$, to the finish time of every possible scheduling interval for all tasks corresponding to the edges of the star graph and we can solve the problem by solving the resulting scheduling problem instance. Task $T_1$ would then have possible scheduling intervals $(2,4.5)$ and $(4,9.5)$, and task $T_2$ would have possible scheduling intervals $(4,10.5)$ and $(10,13.5)$.}
	\label{fig:label_conflict}
\end{figure}
\begin{observation*}
	\textsc{StarExp(3)} is similar to the single processor scheduling problem with discrete starting times when each task has at most two possible starting times~\cite{keil}. Every edge $e$ with labels $l_1<l_2<l_3$ can be viewed as a task to be scheduled, and the possible starting times for that task are $l_1$ and $l_2$. However, in our case the execution times vary depending on which starting time we choose for $e$; if we schedule it with starting time $l_1$, then the execution lasts $l_2-l_1+\varepsilon$ time, whereas if we schedule it with starting time $l_2$, then the execution lasts $l_3-l_2+\varepsilon$ time, for some constant $\varepsilon \in (0,1)$. The addition of $\varepsilon$ is done to ensure that two edges that have the same label are not both scheduled using that label (see Figure~\ref{fig:label_conflict} for an example).
\end{observation*}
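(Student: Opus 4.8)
The plan is to justify the claimed correspondence by exhibiting a mapping between explorations of $(G_s,L)$ and feasible schedules of the associated single-processor instance, and then verifying that this mapping is well-defined in both directions. The natural starting point is the structural fact, already established for the model, that in any exploration we may assume each edge $e$ is entered at some label and exited at the \emph{next} available label. Hence an edge $e$ with labels $l_1<l_2<l_3$ is explored using exactly one of the two windows $[l_1,l_2]$ or $[l_2,l_3]$, which is precisely the binary choice offered to the corresponding task: start at $l_1$ with execution time $l_2-l_1+\varepsilon$, or start at $l_2$ with execution time $l_3-l_2+\varepsilon$. First I would fix this edge-to-task map and record that the two finish times of a task are $l_2+\varepsilon$ and $l_3+\varepsilon$, i.e.\ the chosen exit label shifted right by $\varepsilon$.

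For the forward direction I would take an arbitrary exploration, which visits the edges one at a time in increasing order of their entry labels, and argue that the corresponding task intervals are pairwise non-overlapping. The only subtlety is that a journey requires \emph{strictly} increasing time labels: if one edge is exited at an integer label $t$, the next edge may be entered only at a label strictly greater than $t$, so entering a different edge at the \emph{same} integer label $t$ is forbidden. For the reverse direction I would start from a feasible schedule, read off each task's chosen start time, order the tasks by start time, and reconstruct a journey that enters and exits each edge at the corresponding labels; pairwise non-overlap of the intervals then guarantees that successive entry/exit labels are strictly increasing, so the reconstructed sequence is a valid journey and hence an exploration.

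The main point requiring care --- and the only genuine obstacle --- is the treatment of shared labels, which is exactly what the additive constant $\varepsilon\in(0,1)$ is designed to handle. Without the buffer, two edges exploited in windows $[l_1,l_2]$ and $[l_2,l_3']$ sharing the label $l_2$ would give intervals meeting only at the point $l_2$, hence no overlap, falsely signalling feasibility, even though exiting the first edge at $l_2$ and entering the second at $l_2$ violates the strict-inequality requirement of journeys. I would verify that extending each finish time to $t+\varepsilon$ makes the interval of an edge exited at integer $t$ genuinely overlap the interval of any edge entered at $t$, while (since $\varepsilon<1$) it never reaches the next integer $t+1$, so two truly sequential edges (exit at $t$, enter at some label $\geq t+1$) remain non-overlapping. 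Thus the overlap relation on the buffered intervals coincides exactly with the conflict relation on exploration windows, as illustrated in Figure~\ref{fig:label_conflict}, and neither merges two distinct integer labels nor suppresses the single forbidden coincidence.

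Finally I would emphasise the one discrepancy that prevents this from being an identification with the problem of~\cite{keil}: there each task carries a \emph{fixed} execution time, whereas here the execution time of a task depends on which of its two start times is selected ($l_2-l_1+\varepsilon$ versus $l_3-l_2+\varepsilon$). This is precisely why the statement asserts only that the two problems are \emph{similar}: the correspondence above is an exact reduction to a variable-execution-time variant of discrete-starting-time single-processor scheduling, rather than to the original fixed-execution-time formulation. I do not expect any difficulty beyond the $\varepsilon$-bookkeeping, since the bijection between windows and start-time choices and the equivalence of the two feasibility notions follow directly once the shared-label case is settled.
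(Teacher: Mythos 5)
Your proposal is correct and matches the paper's reasoning exactly: the paper states this observation without a formal proof, and your elaboration --- the edge-to-task map with the two windows $[l_1,l_2]$, $[l_2,l_3]$, the $\varepsilon\in(0,1)$ buffer whose sole purpose is to turn the shared-label boundary coincidence (exit at label $t$, entry at the same $t$) into a genuine interval overlap while leaving truly sequential pairs (exit at $t$, entry at $\geq t+1$) non-overlapping, and the caveat that the execution time depends on the chosen start time, which is precisely why the paper claims only \emph{similarity} with the fixed-execution-time problem of~\cite{keil} --- is the same argument the paper sketches via Figure~\ref{fig:label_conflict}. No gaps; your $\varepsilon$-bookkeeping (in particular that non-overlap of buffered intervals forces the next integer entry label to be strictly larger than the previous exit label) is exactly the point the figure's example illustrates.
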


%

\section{Hardness for $k\geq 6$ labels per edge}\label{sec:L-reduction}

In this section we show that, whenever where $k \geq 6$, \SE\ is NP-complete 
and \MSE\ is APX-hard. Thus, in particular, \MSE\ does not admit a Polynomial-Time Approximation Scheme (PTAS), unless P~=~NP. Furthermore, due to the polynomial-time constant-factor approximation algorithm for \MSE\, which we provide in Section~\ref{greedy-sec}, it follows that \MSE\ is also APX-complete. 
We prove our hardness results through a reduction from a special case of \textsc{3SAT}, namely \textsc{3SAT(3)}, defined below. This problem is known to be NP-complete~\cite{papadimitriou} and its maximization variant (i.e.~\textsc{MAX3SAT(3)}, where the aim is to maximize the number of clauses that can be simultaneously satisfied) is APX-complete~\cite{ausiello}.

\vspace{0cm} \noindent \fbox{ 
	\begin{minipage}{0.96\textwidth}
		\begin{tabular*}{\textwidth}{@{\extracolsep{\fill}}lr} 3SAT(3) \ & \\ \end{tabular*}
		
		\vspace{1.2mm}
		{\bf{Input:}} A boolean formula in CNF with variables $x_1,x_2, \ldots, x_p$ and clauses $c_1,c_2,\ldots,c_q$, such that each clause has at most $3$ literals, and each variable appears in at most $3$ clauses.\\
		{\bf{Output:}} Decision on whether the formula is satisfiable.
\end{minipage}} \vspace{0,3cm}

\paragraph*{Intuition and overview of the reduction}
Given an instance $F$ of 3SAT(3), we shall create an instance $(G_s,L)$ of \SE\ such that $F$ is satisfiable if and only if $(G_s,L)$ is explorable.
Henceforth, we denote by $|\tau(F)|$ the number of clauses of $F$ that are satisfied by a truth assignment $\tau$ of $F$.
Without loss of generality we make the following assumptions on $F$. Firstly, if a variable occurs only with positive (resp. only with negative) literals, then we trivially set it to $true$ (resp.~$false$) and remove the associated clauses. Furthermore, without loss of generality, if a variable $x_i$ appears three times in $F$, we assume that it appears once as a negative literal $\neg{x_i}$ and two times as a positive literal $x_i$; otherwise we rename the negation with a new variable. Similarly, if $x_i$ appears two times in $F$, then it appears once as a negative literal $\neg{x_i}$ and once as a positive literal $x_i$. 

Before we move on to the specifics of our reduction, we shall introduce the intuition behind it.
$(G_s,L)$ will have one edge corresponding to each clause of $F$, and three edges (one ``primary'' and two ``auxiliary'' edges) corresponding to each variable of $F$. 
We shall assign labels in pairs to those edges so that it is possible to explore an edge only by using labels from the same pair to enter and exit the edge; for example, if an edge $e$ is assigned the pairs of labels $l_1,l_2$ and $l_3,l_4$, with $l_1<l_2<l_3<l_4$, we shall ensure that one cannot enter $e$ with, say, label $l_2$ and exit with, say, label $l_3$. 
In particular, for the ``primary'' edge corresponding to a variable $x_i$ we will assign to it two pairs of labels, namely $(\alpha i - \beta, \alpha i - \beta + \gamma)$ and $(\alpha i + \beta, \alpha i + \beta + \gamma)$, for some $\alpha, \beta, \gamma \in \mathbb{N}$. The first (entry,exit) pair corresponds to setting $x_i$ to false, while the second pair corresponds to setting $x_i$ to true. We shall choose $\alpha, \beta, \gamma$ so that the entry and exit from the edge using the first pair is not conflicting with the entry and exit using the second pair. 

Then, to any edge corresponding to a clause $c_j$ that contains $x_i$ unnegated, we shall assign an (entry, exit) pair of labels $(\alpha i - \delta, \alpha i - \delta + \varepsilon)$, choosing $\delta, \varepsilon \in \mathbb{N}$ so that $(\alpha i - \delta, \alpha i - \delta + \varepsilon)$ is in conflict with the $(\alpha i - \beta, \alpha i - \beta + \gamma)$ pair of labels of the edge corresponding to $x_i$, which is associated with $x_i = false$ but not in conflict with the $(\alpha i + \beta, \alpha i + \beta + \gamma)$ pair. 
If $x_i$ is false in $F$ then $c_j$ cannot be satisfied through $x_i$ so we should not be able to explore a corresponding edge via a pair of labels associated with $x_i$. If $c_j$ contains $x_i$ negated, we shall assign to its corresponding edge an (entry, exit) pair of labels $(\alpha i + \zeta, \alpha i + \zeta + \theta)$, choosing $\zeta, \theta \in \mathbb{N}$ so that the latter is in conflict with the $(\alpha i + \beta, \alpha i+- \beta + \gamma)$ pair of labels of the edge corresponding to $x_i$, which is associated with $x_i = true$ but not in conflict with the $(\alpha i - \beta, \alpha i - \beta + \gamma)$ pair. If $x_i$ is true in $F$ then $c_j$ cannot be satisfied through $\neg{x_i}$ so we should not be able to explore a corresponding edge via a pair of labels associated with $\neg{x_i}$. 

Finally, for every variable $x_i$ we also introduce two additional ``auxiliary'' edges: the first one will be assigned the pair of labels $(\alpha i, \alpha i + \xi)$, $\xi \in \mathbb{N}$, so that it is not conflicting with any of the above pairs -- the reason for introducing this first auxiliary edge is to avoid entering and exiting an edge corresponding to some variable $x_i$ using labels from different pairs. 
The second auxiliary edge for variable $x_i$ will be assigned the pair of labels $(\alpha i + \chi, \alpha i + \chi + \psi)$, $\chi, \psi \in \mathbb{N}$, 
so that it is not conflicting with any of the above pairs -- the reason for introducing this edge is to avoid entering an edge that corresponds to some clause $c_j$ using a label associated with some variable $x_i$ and exiting using a label associated with a \emph{different} variable~$x_{i'}$.

\paragraph*{The reduction}
For the reduction from 3SAT(3) to \SE\, we select constants $\alpha, \beta, \gamma, \delta, \varepsilon, \zeta, \theta, \xi, \chi, \psi$ so that all the requirements mentioned above regarding the conflicts of different pairs of labels are satisfied.
In particular, given an instance $F$ of 3SAT(3), we create in polynomial time the following instance $(G_s,L)$ of \SE:
\begin{itemize}
	\item For every variable $x_i,~i=1,2,\ldots,p$, create an edge $e_i$ (the ``primary'' edge for $x_i$) with labels $50i-10$, $50i-7$, $50i+10$, and $50i+13$. The pair of labels $50i-10$, $50i-7$ of $e_i$ represent the assignment $x_i=false$ and the pair of labels $50i+10$, $50i+13$ represent the assignment $x_i=true$. More specifically, entry in $e_i$ with label $50i-10$ and exit from $e_i$ with label $50i-7$ represents $x_i=false$ (and similarly for the other pair of labels and the assignment $x_i=true$).
	\item For every variable $x_i,~i=1,2,\ldots,p$, also create an edge $e_i'$ (the first ``auxiliary'' edge for $x_i$) with labels $50i$ and $50i+1$. The labels on $e_i'$ ensure that we do not enter $e_i$ with a label associated with the assignment $x_i=false$ and exit from $e_i$ with a label associated with the assignment $x_i=true$; in an exploration of $(G_s,L)$, this will not occur since we must explore both $e_i'$ and $e_i$, and this only happens if we enter and exit $e_i$ using labels associated with the same truth assignment for the variable $x_i$.
	\item For every variable $x_i,~i=1,2,\ldots,p$, also create an edge $e_i''$ (the second ``auxiliary'' edge for $x_i$) with labels $50i+15$ and $50i+16$. The labels on $e_i''$ ensure that we do not enter and exit edges associated with clauses of $F$ (see bullet point below) using pairs of labels that are associated with different variables.
	\item For every clause $c_j,~j=1,2,\ldots,q$, create an edge $e_{p+j}$ with the following labels:
	\begin{itemize}
		\item For every variable $x_i$ that appears unnegated for the \emph{first}\footnote{We consider here the order $c_1,c_2,\ldots, c_q$ of the clauses of $C$; we say that $x_i$ appears unnegated for the \emph{first} time in some clause $c_\mu$ if $x_i \not\in c_m,~m<\mu$.} time in $C$, add two labels $50i-12$ and $50i-9$.
		\item For every variable $x_i$ that appears unnegated for the \emph{second}\footnote{Again, we consider the order $c_1,c_2,\ldots, c_q$ of the clauses of $C$; we say that $x_i$ appears unnegated for the \emph{second} time in some clause $c_\mu$ if $\exists m<\mu$, such that $x_i \in c_m$.} time in $C$, add two labels $50i-8$ and $50i-5$. Note here that both (entry,exit) pairs $(50i-12,50i-9)$ and $(50i-8,50i-5)$ are conflicting with the (entry,exit) pair $(50i-10,50i-7)$ of the edge $e_i$ that is associated with the assignment $x_i=false$.
		\item For every variable $x_i$ that appears negated, add two labels $50i+8$ and $50i+11$. Note that the (entry,exit) pair $(50i+8,50i+11)$ is conflicting with the (entry,exit) pair $(50i+10,50i+13)$ of the edge $e_i$ that is associated with the assignment $x_i=true$.
	\end{itemize}
\end{itemize}
The reader is referred to Figure~\ref{fig:poly_reduction} for an example construction.

\begin{figure}[h]
	\centering
     \subfloat[]{%
		\includegraphics[width=0.35\linewidth]{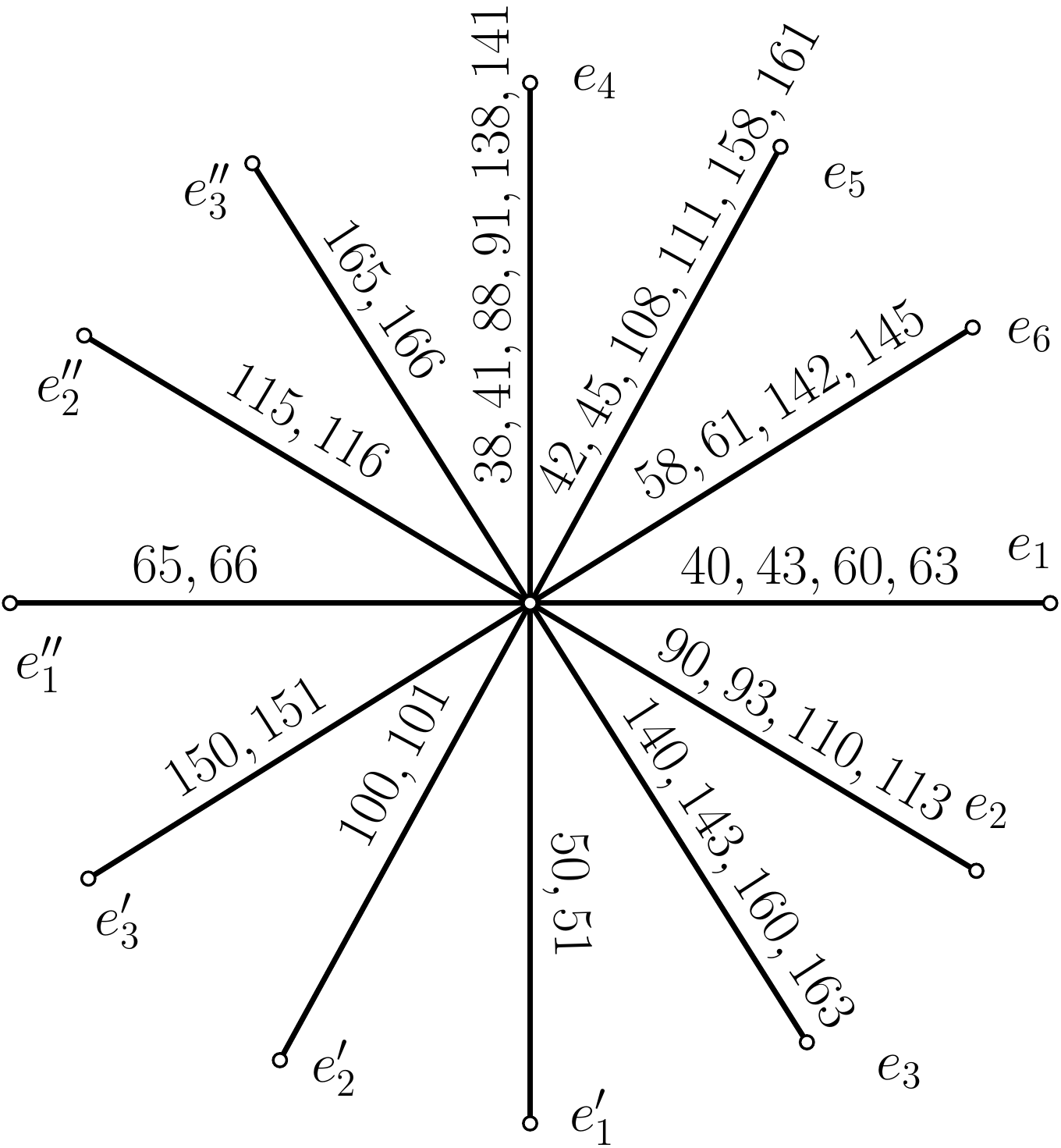}
	}
	\hfill
	\subfloat[]{%
		\includegraphics[width=0.35\linewidth]{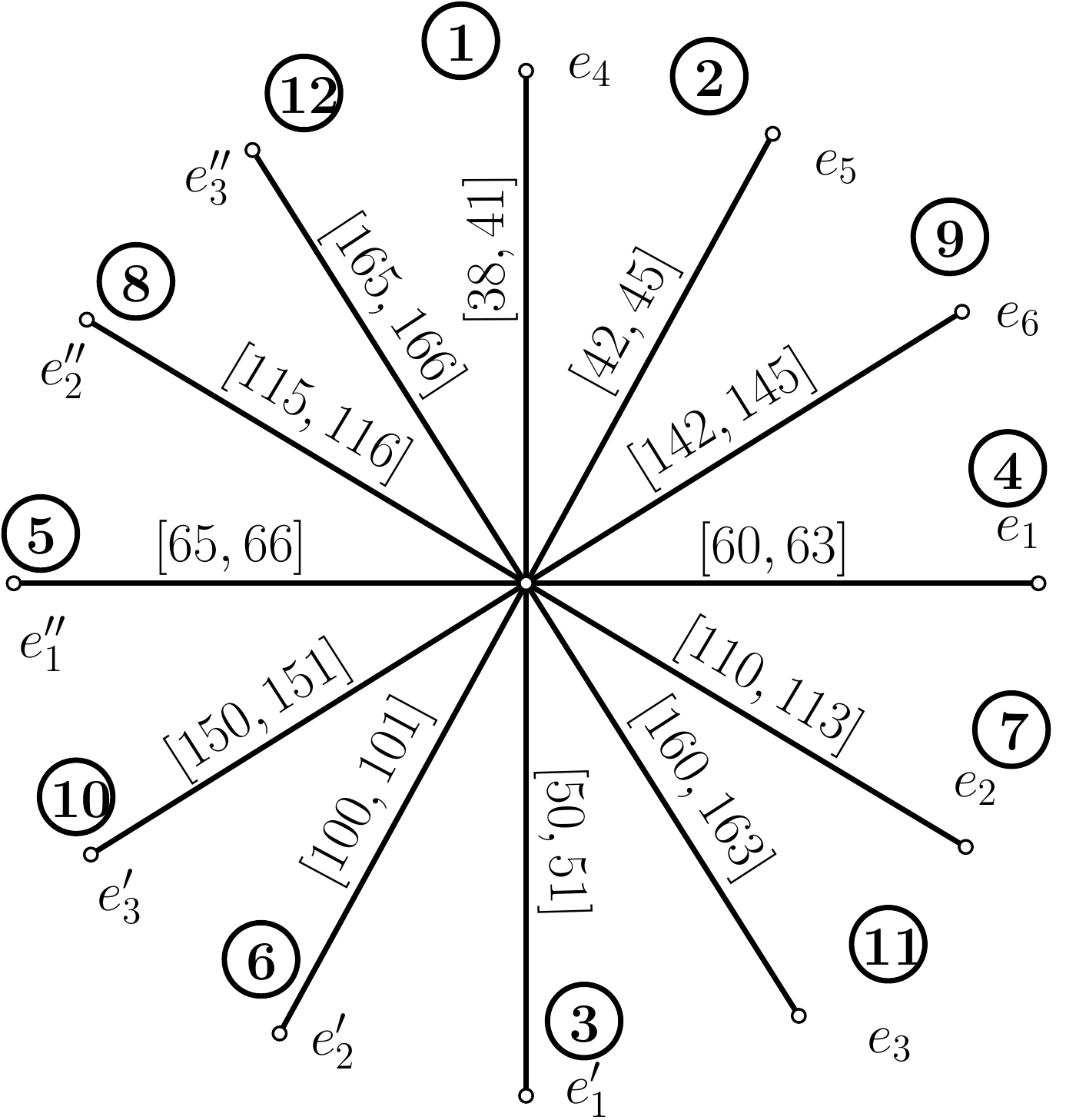}
	}
	\caption{The temporal star constructed for the formula $(x_1 \vee x_2 \vee x_3 ) \wedge (x_1 \vee \neg{x_2} \vee \neg{x_3}) \wedge (\neg{x_1} \vee x_3)$. Setting $x_1$ to true, $x_2$ to true and $x_3$ to true yields a satisfying truth assignment whose corresponding exploration is indicated in (b), where the numbers in the circles indicate the order over time of the exploration of each edge.}\label{fig:poly_reduction}
\end{figure}

Notice that the (entry,exit) pairs on edges associated with different clauses are not conflicting if we pair them in ascending order, i.e.~if an edge has labels $l_1,l_2,l_3,l_4$ we pair them into $(l_1,l_2)$, $(l_3,l_4)$; that is because each variable appears at most twice unnegated and once negated. 
The following lemmas are needed for the proofs of NP-completeness (Theorem~\ref{thm:NP-comp}) and APX-hardness (Theorems~\ref{thm:L-reduc} and~\ref{thm:epsilon-bound}).

\begin{lemma}\label{lem:max_size_explore}
	There exists a (partial) exploration $J$ of $(G_s,L)$ of maximum size which explores all the edges $e_i,e_i',e_i''$, $i=1,2,\ldots,p$.
\end{lemma}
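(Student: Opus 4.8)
The plan is to show that any maximum-size partial exploration can be transformed, without decreasing its size, into one that explores every ``structural'' edge $e_i, e_i', e_i''$. The key observation is that these $3p$ structural edges occupy pairwise non-conflicting label windows: each primary edge $e_i$ uses labels in $[50i-10, 50i+13]$, while its auxiliary partners $e_i'$ and $e_i''$ use $[50i, 50i+1]$ and $[50i+15, 50i+16]$ respectively, and by construction all these fit disjointly within the block of labels ``reserved'' for variable $x_i$. In particular, for different variables $x_i, x_{i'}$ the entire blocks are separated (the labels grow like $50i$), so structural edges of different variables never conflict with one another.

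First I would fix a maximum-size exploration $J$ and argue edge by edge. For the auxiliary edges: $e_i'$ has only one exploration window $(50i, 50i+1)$ and $e_i''$ has only the window $(50i+15, 50i+16)$. I would check that these windows conflict with no clause edge's window and with no window of any other structural edge, so including them in $J$ can only be blocked by another clause edge occupying an overlapping time --- but since the auxiliary windows are length one and sit in ``gaps'' between the label clusters used by clause edges, no clause edge can be entered and exited across them. Hence if $J$ omits $e_i'$ or $e_i''$, I can simply add the corresponding window to $J$ to obtain a strictly larger exploration, contradicting maximality; so every maximum exploration already contains all $e_i', e_i''$.

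For the primary edges the argument is slightly more delicate because $e_i$ has \emph{two} windows, $(50i-10, 50i-7)$ and $(50i+10, 50i+13)$, and a clause edge may be occupying the one of these windows that $J$ would need. Here I would use the fact that a clause edge $e_{p+j}$ conflicting with a given window of $e_i$ can be explored through a different pair of labels (corresponding to a different variable/literal in the clause), or if not, that swapping the exploration of $e_i$ for that single clause edge does not change the size. Concretely, if $e_i$ is unexplored in $J$, at most the two windows of $e_i$ are each blocked by at most one clause edge; I would exhibit a local exchange that frees a window of $e_i$ and re-routes the displaced clause edge(s) through their alternative labels, keeping the size of the exploration unchanged while now explicitly including $e_i$. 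Iterating this over all $i$ yields the desired maximum exploration containing all structural edges.

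The main obstacle I anticipate is the primary-edge exchange argument: I must verify that whenever a clause edge blocks both windows of $e_i$, there is always an alternative pair of labels on that clause edge --- associated with a different variable in the clause --- through which it can instead be explored, and that this re-routing does not cascade into new conflicts elsewhere. This relies on the non-conflicting structure of clause-edge label pairs (each variable appears at most twice unnegated and once negated, so pairing labels in ascending order keeps clause windows disjoint across variable blocks), which the paper established just before the lemma. Handling the case analysis of which window is blocked, and confirming the exchange is always size-preserving, is the technical heart; the auxiliary-edge cases are essentially immediate by the gap structure.
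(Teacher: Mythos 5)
Your proposal has a genuine gap, and it sits exactly at the point you dismiss as ``essentially immediate'': the auxiliary edges. You assume each structural edge can only be explored through its designated windows --- you say $e_i$ has \emph{two} windows, and that the windows of $e_i'$, $e_i''$ ``conflict with no clause edge's window and with no window of any other structural edge''. But under the paper's convention that entering an edge at label $l_j$ forces exit at the next label $l_{j+1}$, the primary edge $e_i$ (four labels) has \emph{three} exploration windows, including the cross-pair window $(50i-7,\,50i+10)$, which strictly contains the unique window $(50i,\,50i+1)$ of $e_i'$; and a clause edge (up to six labels) has up to \emph{five} windows, including the mismatched pairs $(\alpha_2,\alpha_3)$ and $(\alpha_4,\alpha_5)$, which span across the window $(50r+15,\,50r+16)$ of some edge $e_r''$. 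These conflicts are not accidental: they are the entire reason the auxiliary edges exist in the construction. Consequently, your claim that a maximum exploration omitting $e_i'$ or $e_i''$ could be strictly enlarged by ``simply adding'' them fails: $J$ may explore $e_i$ via its middle window (blocking $e_i'$), or a clause edge via a mismatched pair (blocking $e_r''$), and then nothing can be added for free. Indeed your stronger conclusion --- that \emph{every} maximum exploration contains all auxiliary edges --- is false (one can trade the exploration of $e_r''$ for a mismatched exploration of a clause edge at equal size); only the existential statement of the lemma holds.

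What the paper does, and what your proof is missing, is precisely the swap machinery handling these cross-pair windows. First, any clause edge explored via a mismatched pair is exchanged for the (necessarily unexplored) edge $e_r''$ that it blocks, preserving size; only after this normalization are all edges $e_i''$ explored. Next, if some $e_i'$ is missing, then $e_i$ must be occupying its middle window $(50i-7,\,50i+10)$; the paper replaces that window by $(50i+10,\,50i+13)$ --- which is free, since the middle window overlaps $(50i+8,\,50i+11)$, the only pair that could block it --- and adds $e_i'$, contradicting maximality. Only then does the final exchange for a missing $e_i$ go through: remove the clause edge explored via $(50i+8,\,50i+11)$ and insert $e_i$ via $(50i+10,\,50i+13)$. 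A further inaccuracy in your last step: the window $(50i-10,\,50i-7)$ can be blocked by \emph{two} clause edges simultaneously (the pairs $(50i-12,\,50i-9)$ and $(50i-8,\,50i-5)$ are disjoint from each other but each overlaps it), which is why the paper's one-for-one swap argues only through the true-window of $e_i$, not through whichever window happens to be blocked.
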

\begin{proof}
	Let $J$ be a (partial) exploration of $(G_s,L)$ of maximum size. Without loss of generality, we may assume that $J$ explores every edge at most once.
	
	We will show that one may also assume that any edge of the form $e_{p+j}$ corresponding to the clause $c_j$, $j=1,\ldots, q$, that is explored by $J$ is explored via (entry,exit) pairs associated with the same literal of $c_j$. Indeed, let $c_j = (l_\pi \vee l_\rho \vee l_\sigma)$ be a clause whose corresponding edge is visited by $J$, where the literal $l_\pi$ (resp. $l_\rho$ and $l_\sigma$) is $x_\pi$ (resp. $x_\rho$ and $x_\sigma$) or $\neg{x_\pi}$ (resp. $\neg{x_\rho}$ and $\neg{x_\sigma}$), for some $\pi = 1,\ldots, p$ (resp. $\rho = 1,\ldots, p$ and $\sigma = 1,\ldots, p$).
	
	Let $\alpha_1, \ldots, a_6$ the labels of $e_{p+j}$. If $J$ explores $e_{p+j}$ using $(\alpha_2, \alpha_3)$ (resp. $(\alpha_4, \alpha_5)$), then there exists an edge $e_r''$, for some $r=1, \ldots, p$, which is not explored by $J$, by construction of the edges of the form $e_i''$. Then, one can create a (partial) exploration of the same size as $J$ as follows:	starting from $J$ swap the exploration of $e_{p+j}$ with the exploration of $e_r''$. Note that there is no other edge that is explored by $J$ using a time window that overlaps with that of the exploration of $e_r''$, since it would also be conflicting with the window used by $J$ to explore $e_{p+j}$. By iteratively swapping edges of the form $e_{p+j}$ - that are explored using (entry, exit) pairs associated with different literals - with edges of the form $e_r''$, we result in a (partial) exploration $J'$ of the same size as $J$ which explores all edges of the form $e_{p+j}$ via (entry,exit) pairs associated with the same literal of $c_j$. Note that no exploration window of any edge $e_i',e_i''$ overlaps with any exploration window of any edge $e_{p+j}$ in $J'$. In fact, since $J'$ is of maximum size, all edges of the form $e_i''$ must be explored by $J'$.
	
	We will now show that we may also assume that all edges of the form $e_i'$ are explored by a maximum size (partial) exploration of $(G_s,L)$. Assume that an edge $e_i'$ is not explored by $J'$, $i=1, \ldots,p$. Then it must be that the edge $e_i$ corresponding to the variable $x_i$ is explored by $J'$ using the pair $(50i-7, 50i+10)$ (as this is the only possible conflicting exploration window). Construct an exploration $J''$ as follows: starting from $J'$,
	\begin{enumerate}
		\item \label{item:replace} replace $(50i-7, 50i+10)$ by $(50i+10, 50i+13)$ as the exploration window of $e_i$, and
		\item add $e_i'$ to the exploration using its window $(50i, 50i+1)$.
	\end{enumerate}
	
	Notice that step~\ref{item:replace} is indeed possible without causing conflicts with other edges: let $c_\alpha$ be the clause containing $\neg{x_i}$ (so, the edge $e_{p+\alpha}$ has been assigned, amongst others, the labels $50i+8$, $50i+11$); then, since $J'$ explores $e_i$ using $(50i-7, 50i+10)$ it must be that $e_{p+\alpha}$ is not explored by $J'$ -if explored at all- using $(50i+8, 50i+11)$. So, swapping the windows as shown in step~\ref{item:replace} is possible without conflicts. Now, notice that $J''$ has size larger than the size of $J'$ which is a contradiction. Therefore, $e_i$ cannot be explored by $J'$ (or any maximum size exploration of $(G_s,L)$) using $(50i-7, 50i+10)$, and $e_i'$ must be explored by $J'$, for all $i=1, \ldots,p$.
	
	It remains to show that all edges $e_i,~i=1, \ldots,p$, are explored by $J'$. Assume that there is an edge $e_i$, for some $i=1,\ldots, p$, that is not explored by $J'$ and let $c_\alpha$ be the clause that contains $\neg{x_i}$. The only way that $e_i$ cannot be explored by $J'$ is if $J'$ explores edges that cause a conflict with both exploration windows $(50i-10, 50i-7)$ and $(50i+10, 50i+13)$ of $e_i$. In fact, if $e_i$ is not explored by $J'$ then it must be that $e_{p+\alpha}$ is explored using the exploration window $(50i+8, 50i+11)$. Then we can create $J''$ of same size as $J'$, starting from $J'$, by removing $e_{p+\alpha}$ from the exploration and adding $e_i$ to the exploration, exploring it using the window $(50i+10, 50i+13)$. This way, one can create a maximum size exploration that contains all edges $e_i$, $i=1,\ldots, p$.
	
	We conclude that there can always be found an exploration of maximum size which explores all edges $e_i,e_i',e_i'',~i=1,\ldots,p$, which completes the proof of the lemma.
\end{proof}

\begin{lemma}\label{lem:apx}
	There exists a truth assignment $\tau$ of $F$ with $|\tau(F)| \geq \beta$ if and only if there exists a (partial) exploration $J$ of $(G_s,L)$ of size $|J|\geq 3p + \beta$.
\end{lemma}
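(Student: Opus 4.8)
The plan is to prove the biconditional by establishing a correspondence between satisfying truth assignments of $F$ and maximum-size explorations of $(G_s,L)$, using Lemma~\ref{lem:max_size_explore} as the crucial bridge. By that lemma, we may restrict attention to explorations $J$ that explore all $3p$ edges of the form $e_i, e_i', e_i''$; these account for the ``$3p$'' term in the size bound, so the real content is matching the number $\beta$ of satisfied clauses with the number of clause-edges $e_{p+j}$ explored beyond this baseline.

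First I would prove the forward direction. Given a truth assignment $\tau$ with $|\tau(F)| \geq \beta$, I construct an exploration as follows. For each variable $x_i$, explore the primary edge $e_i$ using the pair $(50i-10, 50i-7)$ if $\tau(x_i) = \mathit{false}$ and using $(50i+10, 50i+13)$ if $\tau(x_i) = \mathit{true}$; also explore both auxiliary edges $e_i', e_i''$ via their unique label pairs. These auxiliary-edge windows are non-conflicting with everything by construction. Then, for each of the (at least $\beta$) clauses $c_j$ satisfied by $\tau$, pick a literal of $c_j$ made true by $\tau$ and explore the clause-edge $e_{p+j}$ via the label pair associated with that literal. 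I must verify this pair does not conflict with the chosen primary-edge window: if the satisfying literal is $x_i$ (unnegated and true), its clause-pair sits near $50i-12,\ldots,50i-5$ (the ``false'' side), while $e_i$ is explored on the ``true'' side $(50i+10,50i+13)$, so there is no conflict; symmetrically for a negated satisfying literal. Distinct clause-edges do not conflict with each other because, as the excerpt notes, each variable contributes at most two unnegated and one negated occurrence, so their ascending-paired windows are disjoint. This yields an exploration of size at least $3p + \beta$.

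Next I would prove the converse. Starting from an exploration $J$ with $|J| \geq 3p + \beta$, apply Lemma~\ref{lem:max_size_explore} to assume without loss of generality that $J$ explores all $3p$ edges $e_i, e_i', e_i''$ and that each explored clause-edge $e_{p+j}$ is traversed via a pair associated with a single literal of $c_j$. Then at least $\beta$ clause-edges are explored. I define $\tau$ by reading off each primary edge: set $x_i = \mathit{false}$ if $e_i$ is explored via $(50i-10,50i-7)$ and $x_i = \mathit{true}$ if via $(50i+10,50i+13)$ (the auxiliary edge $e_i'$ forces $e_i$ to use one coherent pair, so exactly one case holds). The key claim is that every explored clause-edge corresponds to a clause satisfied by $\tau$: if $e_{p+j}$ is explored via the pair for an unnegated occurrence of $x_i$, that pair conflicts with the ``false'' window of $e_i$, so $e_i$ must have used the ``true'' window, whence $\tau(x_i) = \mathit{true}$ and $c_j$ is satisfied; symmetrically for a negated occurrence. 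Hence $|\tau(F)| \geq \beta$.

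The main obstacle is the converse direction, specifically ensuring that the $\beta$ explored clause-edges map to $\beta$ \emph{distinct} satisfied clauses and that the conflict analysis is airtight. Since each clause-edge $e_{p+j}$ corresponds to a distinct clause $c_j$, distinctness is immediate once each explored clause-edge is shown to witness its own clause's satisfaction; the delicate point is the conflict argument, which relies entirely on the carefully chosen numerical gaps (the spacing of $50i \pm 7, \pm 10, \pm 13$ against the clause labels $50i-12, 50i-9, 50i-8, 50i-5$ and $50i+8, 50i+11$) guaranteeing that a clause-pair conflicts with exactly the intended truth-window of its variable and nothing else. I expect to lean on the explicit constants and on the already-established structural normalization from Lemma~\ref{lem:max_size_explore} to keep this bookkeeping manageable rather than re-deriving the non-conflict relations from scratch.
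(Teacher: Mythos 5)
Your proposal is correct and takes essentially the same approach as the paper: the forward direction is the paper's construction (explore all $e_i',e_i''$, explore each $e_i$ according to $\tau$, and explore each satisfied clause's edge via a true literal's pair), and the converse invokes Lemma~\ref{lem:max_size_explore} to normalize $J$ and then extracts a truth assignment from the exploration windows, exactly as the paper does. The only (harmless, arguably slightly cleaner) difference is that you read $\tau$ off the primary edges $e_i$ and then verify that each explored clause-edge witnesses a satisfied clause, whereas the paper defines $\tau$ directly from the pairs used on the explored clause-edges; both versions rest on the same conflict structure between clause pairs and the primary edges' truth windows.
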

\begin{proof}
	($\Rightarrow $) Assume that there is a truth assignment $\tau$ that satisfies $\beta$ clauses of $F$. We give a (partial) exploration $J$ of $(G_s,L)$, of size $3p+\beta$, as follows. First, we add to $J$ all the edges $e_i',e_i'',~i=1,2,\ldots, p$; these are $2p$ edges in total and can only be explored one way as they have each been assigned two labels. Then, we add to $J$ all edges $e_i,~i=1,2,\ldots,p$ which are $p$ edges in total; we explore each $e_i$ depending on the value of $x_i$ in $\tau$, namely if $x_i=true$ we explore $e_i$ using the pair $(50i+10,50i+13)$, and if $x_i=false$ we explore $e_i$ using the pair $(50i-10,50i-7)$. Now, consider an arbitrary clause $c_j$ of $F$ that is satisfied in $\tau$, i.e.~it has at least one true literal which is of the form $x_i$ or $\neg{x_i}$, for some $i=1,2,\ldots,p$. If $x_i =true$ then we explore $e_{p+j}$ using the pair of labels that corresponds to the unnegated appearance of $x_i$ in $c_j$ -- depending on whether $x_i$ appears unnegated for the first or the second time in $c_j$, this pair is $(50i-12,50i-9)$ or $(50i-8,50i-5)$, respectively. If $\neg{x_i} =true$ then we explore $e_{p+j}$ using the pair of labels $(50i+8,50i+11)$ that corresponds to the negated appearance of $x_i$ in $c_j$. As there are at least $\beta$ satisfied clauses of $F$ in $\tau$, we have added at least $\beta$ extra edges to $J$. Notice that it has already been established in the previous section that all the (entry,exit) pairs chosen for the exploration of the edges that we added in $J$ are pairwise non-conflicting. So, $J$ is a (partial) exploration of $(G_s,L)$ which explores at least $3p+\beta$ edges.
	
	($\Leftarrow $) Assume that there is a (partial) exploration of $(G_s,L)$ which explores at least $3p+\beta$ edges. By Lemma~\ref{lem:max_size_explore}, there is a (partial) exploration $J$ of $(G_s,L)$ of maximum size which explores all edges $e_i,e_i',e_i''$, for $i=1,2,\ldots,p$. It is $|J| \geq 3p + \beta$ and we know that $J$ already explores $3p$ edges that are not associated with clauses of $F$. So, it must be that $J$ also explores at least $\beta$ edges that are associated with clauses of $F$.
	We can construct a truth assignment $\tau$ of $F$ which satisfies at least $\beta$ clauses as follows. We check the (entry,exit) pairs of exploration in $J$ of the edges that correspond to clauses of $F$. The entry and exit labels must be associated with the same variable $x_i$, otherwise there would be conflict with the exploration of the respective edge $e_i''$. So, for each such edge, we set the variable $x_i$ associated to the chosen (entry,exit) pair to true if the pair corresponds to an unnegated appearance (for the first or second time in $F$) of the variable, and we set $x_i$ to false otherwise. Without loss of generality, we can set any remaining variables to true and it is easy to see that the resulting truth assignment satisfies at least $\beta$ clauses of $F$, each one associated with an edge $e_{p+j}$, $j=1,2,\ldots,q$, that is explored in $J$.
\end{proof}

We move on to the main theorem of the section.

\begin{theorem}\label{thm:NP-comp}
	\textsc{StarExp($k$)} is NP-complete for every $k\geq6$.
\end{theorem}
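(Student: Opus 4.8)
The plan is to prove NP-completeness in the standard two steps: membership in NP, followed by NP-hardness via the reduction from \textsc{3SAT(3)} that has already been set up above. The real content is already packaged in Lemma~\ref{lem:max_size_explore} and Lemma~\ref{lem:apx}, so the theorem should follow as a clean consequence rather than requiring new combinatorial work.

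For membership in NP, I would take as a certificate the exploration schedule itself, i.e.\ the sequence of time edges comprising the journey that starts and ends at the center $c$. Since the underlying graph is a star with $n-1$ leaves and a complete exploration enters and exits each leaf exactly once, such a certificate consists of at most $2(n-1)$ time edges and thus has size polynomial in $n$. Verifying it amounts to checking that the labels along the sequence are strictly increasing, that the walk begins and ends at $c$, and that every leaf is visited --- all doable in linear time. Hence \SE\ is in NP for every $k$.

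For hardness, I would invoke the reduction constructed above, which given a formula $F$ with $p$ variables and $q$ clauses produces in polynomial time an instance $(G_s,L)$ having exactly $3p+q$ edges: the $p$ primary edges $e_i$, the $2p$ auxiliary edges $e_i',e_i''$, and the $q$ clause edges $e_{p+j}$. I would first confirm the label budget: the primary edges carry $4$ labels, each auxiliary edge carries $2$ labels, and each clause edge carries at most $6$ labels (two per literal, at most three literals per clause). Thus every edge has at most $6$ labels, so the construction is a legitimate instance of \textsc{StarExp(6)}, and \emph{a fortiori} of \SE\ for every $k\geq 6$.

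The core of the argument is then an immediate application of Lemma~\ref{lem:apx} with $\beta = q$. Since $(G_s,L)$ has precisely $3p+q$ edges, an exploration of size at least $3p+q$ is necessarily a \emph{complete} exploration, so $(G_s,L)$ is explorable if and only if it admits an exploration of size $\geq 3p+q$. By the lemma this holds if and only if $F$ admits a truth assignment satisfying at least $q$ of its clauses, i.e.\ satisfying \emph{all} of them --- which is exactly the statement that $F$ is satisfiable. This chain of equivalences yields a polynomial-time many-one reduction from \textsc{3SAT(3)} to \SE\ for every $k\geq 6$, completing the proof. I do not anticipate any genuine obstacle: the technical heart (the label gadgets enforcing the correct conflict pattern, and the normalization in Lemma~\ref{lem:max_size_explore} that lets us assume all of $e_i,e_i',e_i''$ are explored) has already been established, and the only point requiring care is the specialization $\beta=q$ together with the exact edge count, which converts the approximation-flavored Lemma~\ref{lem:apx} into the exact satisfiability/explorability equivalence.
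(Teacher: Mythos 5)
Your proof is correct and follows essentially the same route as the paper: NP membership via direct polynomial-time verification of the exploration certificate, and hardness by specializing Lemma~\ref{lem:apx} to $\beta=q$, noting that the constructed instance has exactly $3p+q$ edges so that an exploration of size at least $3p+q$ is necessarily complete, whence $F$ is satisfiable if and only if $(G_s,L)$ is explorable. The only point where you diverge is the extension from $k=6$ to arbitrary $k\geq 6$: you observe that an instance with at most $6$ labels per edge is already a legal instance of \SE\ for every $k\geq 6$ (since the problem only bounds the number of labels per edge by $k$ from above), whereas the paper instead adjoins an ``artificial'' edge $e^*$ whose labels are much larger than all others. Under the paper's stated definition of \SE\ (``every edge has at most $k$ labels''), your \emph{a fortiori} argument is perfectly valid and slightly cleaner; the paper's padding trick would only be required under a stricter reading in which hard instances must actually realize $k$ labels on some edge.
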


\begin{proof}
	It is easy to see that \SE\ is in NP, for every $k$. We may verify any solution, i.e.~exploration, in polynomial time by checking that it visits all $O(n)$ vertices and it enters and exits all $O(n)$ edges on existing edge-labels (we would need to check at most $k(n-1)$ labels in total). 
	
	An immediate corollary of Lemma~\ref{lem:apx} is that $F$ is satisfiable if and only if $(G_s,L)$ is explorable. Therefore, since the constructed instance $(G_s,L)$ from the reduction has at most~6 labels per edge, it follows that \textsc{StarExp(6)} is NP-complete.

	To extend this result to the NP-completeness of \SE\ also for values $k\geq 6$, it suffices to add to  the constructed instance $(G_s,L)$ an ``artificial'' edge $e^*$ that only contains labels that are much larger than any of the labels on the other edges of $(G_s,L)$. If $F$ is satisfiable then the exploration of $(G_s,L)$ is as described previously, with the addition of exploring $e^*$ using any of its exploration windows. This is possible, since none of those will be conflicting with any window of any other edge. Conversely, if $(G_s,L)$ is explorable, then the exploration of $e^*$ can be ignored regarding the satisfiability of $F$ since it overlaps with no other edge's exploration.
\end{proof}

Using Lemma~\ref{lem:apx} we can now prove the APX-hardness of \MSE.

\begin{theorem}\label{thm:L-reduc}
	\MSE\ is APX-hard, for $k\geq 6$.
\end{theorem}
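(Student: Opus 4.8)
The goal is to prove that \MSE\ is APX-hard for every $k \geq 6$. The natural strategy is to establish an L-reduction from \textsc{MAX3SAT(3)}, which is known to be APX-complete, to \MSE. Since Lemma~\ref{lem:apx} already provides the essential structural correspondence between truth assignments of $F$ and (partial) explorations of $(G_s,L)$, the plan is to verify that the polynomial-time reduction constructed in the previous section satisfies the two defining inequalities of an L-reduction, with appropriate constants.

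First I would recall the definition of an L-reduction: given optimization problems $\Pi$ and $\Pi'$, an L-reduction consists of a polynomial-time mapping from instances of $\Pi$ to instances of $\Pi'$ together with a polynomial-time mapping of solutions back, such that there exist constants $\lambda_1, \lambda_2 > 0$ with (i) $\mathrm{OPT}'(I') \leq \lambda_1 \cdot \mathrm{OPT}(I)$, and (ii) for every feasible solution of $I'$ with value $v'$, the corresponding solution of $I$ with value $v$ satisfies $|\mathrm{OPT}(I) - v| \leq \lambda_2 \cdot |\mathrm{OPT}'(I') - v'|$. Here $\Pi = \textsc{MAX3SAT(3)}$ and $\Pi' = \MSE$. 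The instance mapping is exactly the construction $F \mapsto (G_s,L)$ already described, which runs in polynomial time.

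The key step is to use Lemma~\ref{lem:apx} to pin down both optimal values. By that lemma, the maximum number of satisfiable clauses $\mathrm{OPT}(F)$ and the maximum exploration size $\mathrm{OPT}(G_s,L)$ are related by $\mathrm{OPT}(G_s,L) = 3p + \mathrm{OPT}(F)$. To verify inequality (i), I would invoke the standard fact that in any \textsc{MAX3SAT(3)} instance a random (or greedy) assignment satisfies at least a constant fraction of the clauses, so $q = O(\mathrm{OPT}(F))$; combined with $p = O(q)$ (each variable appears in at most three clauses and each clause has at most three literals, so $p \leq 3q$), one obtains $3p + \mathrm{OPT}(F) \leq \lambda_1 \cdot \mathrm{OPT}(F)$ for a suitable constant $\lambda_1$. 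For inequality (ii), given any exploration $J$ of size $|J|$, the backward map (extract a truth assignment $\tau$ as in the $(\Leftarrow)$ direction of Lemma~\ref{lem:apx}) yields $|\tau(F)| \geq |J| - 3p$, so that $\mathrm{OPT}(F) - |\tau(F)| \leq (3p + \mathrm{OPT}(F)) - |J| = \mathrm{OPT}(G_s,L) - |J|$, giving inequality (ii) with $\lambda_2 = 1$. Combining this with the APX-completeness of \textsc{MAX3SAT(3)} yields APX-hardness of \MSE.

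The main obstacle I anticipate is inequality (i): it is not automatic, since $\mathrm{OPT}(G_s,L)$ carries the additive overhead of $3p$ ``mandatory'' edges that have nothing to do with clause satisfaction, and an L-reduction requires the optimum of the target instance to be bounded by a constant multiple of $\mathrm{OPT}(F)$ rather than by $\mathrm{OPT}(F)$ plus an additive term. The resolution is precisely the constant lower bound on $\mathrm{OPT}(F)$ in terms of $q$ (and hence $p$), which holds for \textsc{MAX3SAT(3)} because the preprocessing in the reduction guarantees every variable occurs both positively and negatively, so no clause is trivially unsatisfiable and a constant fraction is always satisfiable. Making this constant explicit, and thereby producing a concrete $\lambda_1$, is the one place where a short quantitative argument is genuinely needed; everything else follows mechanically from Lemma~\ref{lem:apx}.
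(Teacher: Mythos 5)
Your proposal is correct and takes essentially the same approach as the paper: an L-reduction from \textsc{MAX3SAT(3)} based on Lemma~\ref{lem:apx}, with the solution-mapping constant equal to $1$ exactly as in the paper, and the first constant obtained from the same two facts you cite ($\mathrm{OPT}(F)\geq q/2$ via a random assignment, and $p\leq 3q$), which the paper simply instantiates concretely as $\gamma=19$. The only difference is that you leave $\lambda_1$ implicit, but you correctly identify the short quantitative argument needed to fix it.
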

\begin{proof}
	Denote by $OPT_{\textsc{Max3SAT(3)}}(F)$ the greatest number of clauses that can be simultaneously satisfied by a truth assignment of $F$.
	The proof is done by an \emph{L-reduction}~\cite{papadimitriou91}
	from the \textsc{Max3SAT(3)} problem, i.e.~by an approximation preserving
	reduction which linearly preserves approximability features. For such a
	reduction, it suffices to provide a polynomial-time computable function $g$
	and two constants $\gamma ,\delta >0$ such that:
	
	\begin{itemize}
		\item $OPT_{\textsc{MaxStarExp}}((G_s,L)) \leq \gamma \cdot OPT_{\textsc{Max3SAT(3)}}(F)$, for any boolean formula~$F$, and
		\item for any (partial) exploration $J'$ of $(G_s,L)$, $g(J')$ is a truth assignment for $F$ and $OPT_{\textsc{Max3SAT(3)}}(F)-|g(J')| \leq \delta ( OPT_{\textsc{MaxStarExp}}((G_s,L)) -|J|)$, where $|g(J')|$ is the number of clauses of $F$ that are satisfied by $g(J')$.
	\end{itemize}
	
	We will prove the first condition for $\gamma =19$. Note that a random truth assignment satisfies each clause of $F$ with probability at least $\frac{1}{2}$ (if each clause had exactly 3 literals, then it would be satisfied with probability $\frac{7}{8} $, but we have to account also for single-literal and two-literal clauses), and thus there exists an assignment $\tau $ that satisfies at least $\frac{q}{2}$ clauses of $F $. 
Furthermore, since every clause has at most 3 literals, it follows that $q\leq \frac{p}{3}$. 
Therefore $OPT_{\textsc{Max3SAT(3)}}(F)\geq \frac{q}{2} \leq \frac{p}{6}$, and thus $p\leq 6 \cdot OPT_{\textsc{Max3SAT(3)}}(F )$. Now Lemma~\ref{lem:apx} implies that:
	
	\begin{eqnarray}
	OPT_{\textsc{MaxStarExp}}((G_s,L)) &=& 3p+ OPT_{Max3SAT(3)}(F ) 
	\notag \\
	&\leq & 3\cdot 6 \cdot OPT_{Max3SAT(3)}(F ) + OPT_{Max3SAT(3)}(F )
	\notag \\
	&= & 19 \cdot OPT_{\textsc{Max3SAT(3)}}(F ) \notag
	\end{eqnarray}
	
	To prove the second condition for $\delta =1$, consider an arbitrary partial exploration $J'$ of $G_{s} (L)$. As described in the ($\Leftarrow $)-part of the proof of Lemma~\ref{lem:apx}, we construct in polynomial time a truth assignment $g(J')=\tau $ that satisfies at least $OPT_{\textsc{MaxStarExp}}((G_s,L)) -3p$ clauses of $F $, i.e. $|g(J')|=|\tau (F )|\geq |J'|-3p$. Then:
	\begin{eqnarray}
	OPT_{\textsc{Max3SAT(3)}}(F)-|g(J')| &\leq &  OPT_{\textsc{Max3SAT(3)}}(F ) - |J'| +3p   \notag \\
	&=&  OPT_{\textsc{MaxStarExp}}((G_s,L)) - 3p - |J'| + 3p \notag\\
	&=&  OPT_{\textsc{MaxStarExp}}((G_s,L)) - |J'| \notag
	\end{eqnarray}
	
	This completes the proof of the theorem. 
\end{proof}

Now we prove a correlation between the inapproximability bounds for the \MSE\ problem and \textsc{Max3SAT(3)}, as a result of the L-reduction presented in Theorem~\ref{thm:L-reduc}. 
Note that, since \textsc{Max3SAT(3)} is APX-hard~\cite{ausiello}, there exists a constant $\varepsilon_0 >0$ such that there exists no polynomial-time constant-factor approximation algorithm for \textsc{Max3SAT(3)} with approximation ratio greater than $(1-\varepsilon_0)$, unless P~=~NP.

\begin{theorem}
\label{thm:epsilon-bound}
	Let $\varepsilon_0>0$ be the constant such that, unless P~=~NP, there exists no polynomial-time constant-factor approximation algorithm for \textsc{Max3SAT(3)} with approximation ratio greater than $(1-\varepsilon_0)$. 
Then, unless P~=~NP, there exists no polynomial-time constant-factor approximation algorithm for \MSE\  with approximation ratio greater than $(1-\frac{\varepsilon_0}{19})$.
\end{theorem}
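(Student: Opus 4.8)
The plan is to derive this as a direct consequence of the L-reduction constructed in Theorem~\ref{thm:L-reduc}, exploiting the standard fact that an L-reduction with parameters $\gamma$ and $\delta$ transfers an inapproximability threshold after scaling it by the factor $\gamma\delta$. Concretely, I would argue by contradiction: suppose that, contrary to the claim, there is a polynomial-time approximation algorithm $\mathcal{A}$ for \MSE\ whose approximation ratio is strictly greater than $(1-\frac{\varepsilon_0}{19})$. Then there is some $\epsilon<\frac{\varepsilon_0}{19}$ such that, on every instance $(G_s,L)$, the algorithm $\mathcal{A}$ outputs a (partial) exploration $J'$ with $|J'|\geq (1-\epsilon)\,OPT_{\textsc{MaxStarExp}}((G_s,L))$.

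Given a \textsc{Max3SAT(3)} instance $F$, I would first build the temporal star $(G_s,L)$ in polynomial time via the reduction of Section~\ref{sec:L-reduction}, run $\mathcal{A}$ to obtain $J'$, and then apply the map $g$ from the ($\Leftarrow$)-direction of Lemma~\ref{lem:apx} (as already used in Theorem~\ref{thm:L-reduc}) to obtain, in polynomial time, a truth assignment $\tau=g(J')$. The heart of the argument is to chain the two L-reduction inequalities with $\gamma=19$ and $\delta=1$: from the second L-reduction property we have $OPT_{\textsc{Max3SAT(3)}}(F)-|g(J')|\leq \delta\bigl(OPT_{\textsc{MaxStarExp}}((G_s,L))-|J'|\bigr)$; the assumed guarantee of $\mathcal{A}$ bounds $OPT_{\textsc{MaxStarExp}}((G_s,L))-|J'|\leq \epsilon\,OPT_{\textsc{MaxStarExp}}((G_s,L))$; and the first L-reduction property bounds $OPT_{\textsc{MaxStarExp}}((G_s,L))\leq \gamma\,OPT_{\textsc{Max3SAT(3)}}(F)$. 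Composing these three estimates yields $OPT_{\textsc{Max3SAT(3)}}(F)-|\tau(F)|\leq \gamma\delta\,\epsilon\,OPT_{\textsc{Max3SAT(3)}}(F)$, equivalently $|\tau(F)|\geq (1-19\epsilon)\,OPT_{\textsc{Max3SAT(3)}}(F)$.

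Since $\epsilon<\frac{\varepsilon_0}{19}$ we obtain $1-19\epsilon>1-\varepsilon_0$, so $\tau$ is a polynomial-time-computable assignment satisfying strictly more than a $(1-\varepsilon_0)$-fraction of $OPT_{\textsc{Max3SAT(3)}}(F)$ clauses; this is a polynomial-time approximation algorithm for \textsc{Max3SAT(3)} with ratio greater than $(1-\varepsilon_0)$, contradicting the choice of $\varepsilon_0$ unless P~=~NP. I expect the only delicate points to be bookkeeping rather than conceptual: ensuring that the multiplicative constant emerging from the composition is \emph{exactly} $\gamma\delta=19$ (so that the threshold comes out as $\frac{\varepsilon_0}{19}$ and nothing looser), and tracking the strict-versus-non-strict inequalities carefully so that the phrase ``ratio greater than'' is preserved all the way through the chain. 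No new structural insight about temporal stars is required, since all the combinatorial content already resides in Lemmas~\ref{lem:max_size_explore} and~\ref{lem:apx} and in the L-reduction of Theorem~\ref{thm:L-reduc}.
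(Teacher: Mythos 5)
Your proposal is correct and takes essentially the same route as the paper's own proof: both chain the two L-reduction inequalities of Theorem~\ref{thm:L-reduc} (with $\gamma=19$, $\delta=1$) together with the assumed approximation guarantee to turn a $(1-\epsilon)$-approximation for \MSE\ into a $(1-19\epsilon)$-approximation for \textsc{Max3SAT(3)}. The only difference is presentational: the paper concludes directly that $\varepsilon\geq\frac{\varepsilon_0}{19}$, whereas you frame the same inequality chain as a proof by contradiction.
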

\begin{proof}

Let $\varepsilon>0$ be a constant such that there exists a polynomial-time approximation algorithm 
$\mathcal{A}$ for \MSE\ with ratio $(1-\varepsilon)$. Let $F$ be an instance of \textsc{MAX3SAT(3)} 
with $p$ variables and $q$ clauses. 
We construct the instance $(G_s,L)$ of \MSE\ corresponding to $F$, as described in the L-reduction 
(see Theorem~\ref{thm:L-reduc}). Then we apply the approximation algorithm $\mathcal{A}$ to $(G_s,L)$, 
which returns a (partial) exploration $J$. 
Note that $|J|\geq (1-\varepsilon) \cdot OPT_{\textsc{MaxStarExp}}$.
As described in the proof of Lemma~\ref{lem:apx}, we construct from $J$ in polynomial time 
a truth assignment $\tau$; we denote by $|\tau|$ the number of clauses in $F$ that are satisfied by the truth assignment $\tau$. 
It now follows from the proof of Theorem~\ref{thm:L-reduc} that:
$$
OPT_{\textsc{Max3SAT(3)}}(F)-|\tau| \leq   OPT_{\textsc{MaxStarExp}}((G_s,L)) - |J| 
\leq   19\varepsilon \cdot OPT_{\textsc{Max3SAT(3)}}(F) 
$$
Therefore $|\tau|\geq (1-19\varepsilon)\cdot OPT_{\textsc{Max3SAT(3)}}(F)$. 
That is, using algorithm $\mathcal{A}$, we can devise a polynomial-time algorithm for \textsc{MAX3SAT(3)} 
with approximation ratio $(1-19\varepsilon)$. Therefore, due to the assumptions of the theorem 
it follows that $\varepsilon \geq \frac{\varepsilon_0}{19}$, unless P~=~NP. 
This completes the proof of the theorem.
\end{proof}

\section{An efficient 2-approximation greedy algorithm for \MSE}\label{greedy-sec}

Note that instances of \MSE\ have the following property: An edge $e_i$ that has an exploration window with the earliest exit time is selected in an optimal solution. Indeed, suppose that it is not the case and consider an optimal solution not exploring $e_i$. One can exchange the explored edge of this solution that has earliest exit time with the edge $e_i$, exploring $e_i$ using its first exploration window.

The above property leads to the construction of a polynomial time greedy approximation for \MSE, which as shown below, achieves approximation ratio $2$.

\begin{algorithm}[H]	
	\KwData{a temporal star graph $(G_s,L)$ with at most $k$ labels per edge, $k \in \mathbb{N}^*$}
	\KwResult{a (partial) exploration of $(G_s,L)$}
	
	\renewcommand{\thealgocf}{}
	\SetAlgorithmName{Algorithm~GREEDY}\\
	
	Initialize the set of candidate edges to be $\mathcal{C} = E$\;
	Initialize the set of explored edges to be $Exp =\emptyset$\;
	t:=0\;
	\While{$\mathcal{C}\not=\emptyset$ \textbf{\em or} no $e \in \mathcal{C}$ has $2$ labels greater or equal to $t$}{
		Find $e\in \mathcal{C}$ to be explored with entry time at least $t$ and minimum exit time. Let $t_0$ be said exit time\;
		Add $e$ to the set of explored edges, $Exp$ (with exploration window from $t$ until $t_0$)\;
		Remove $e$ from the set of candidate edges, $\mathcal{C}$\;
		$t=t_0+1$\;
	}

	\caption{A 2-approximation algorithm for \MSE}
	\label{alg:greedy}
\end{algorithm}

\begin{theorem}
	Algorithm~GREEDY is a 2-approximation algorithm for \MSE, running in time $\Theta(kn^2)$. The approximation ratio of the algorithm is tight, i.e.~there is an instance of the problem where Algorithm~GREEDY achieves exactly ratio $2$ (see~Figure~\ref{fig:tight_approx}).
\end{theorem}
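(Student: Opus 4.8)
The plan is to establish three separate claims: the running-time bound $\Theta(kn^2)$, the approximation guarantee of $2$, and the tightness of this guarantee. The key structural observation driving the whole proof is that Algorithm~GREEDY is precisely the ``earliest-exit-time first'' greedy rule for a job-interval-selection problem: each edge $e$ plays the role of a \emph{job}, each exploration window $[l_i,l_{i+1}]$ of $e$ (a consecutive pair of labels) plays the role of a candidate \emph{interval} of that job, and we must select a set of pairwise non-conflicting windows using at most one window per edge. Indeed, maintaining the current time $t$ and repeatedly choosing the window with entry $\geq t$ and minimum exit $t_0$ (then advancing to $t=t_0+1$) is equivalent to scanning all windows in order of non-decreasing exit time and selecting a window whenever it is compatible with the previously selected ones and its edge is still unused.

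For the running time, I would argue as follows. Since each iteration of the \textbf{while}-loop removes exactly one edge from the candidate set $\mathcal{C}$, there are at most $n-1=O(n)$ iterations. Within one iteration we scan the $O(n)$ remaining candidate edges and, for each, inspect its $O(k)$ labels to locate the smallest exit time of a window with entry $\geq t$; this costs $O(kn)$ per iteration and hence $O(kn^2)$ in total. For the matching lower bound $\Omega(kn^2)$ it suffices to exhibit one instance on which all $n-1$ edges are explored (so all iterations run) with the $j$-th iteration still holding $\Theta(n-j)$ candidates, each carrying $\Theta(k)$ labels: placing the labels of edge $i$ inside a narrow window around $100i$ makes GREEDY explore the edges one by one in order, so the total scanning cost is $\sum_{j}\Theta(k(n-j))=\Theta(kn^2)$.

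The core of the proof is the approximation ratio, which I would obtain by a charging argument in the spirit of job-interval selection. Let $A$ be the set of windows returned by GREEDY and $O$ a maximum exploration; I will define a map $w$ from $O$ to $A$ and show each window of $A$ receives at most two charges, giving $|O|\le 2|A|$. For $V\in O$ with edge $e_V$ and exit time $g$: if $e_V$ is explored by GREEDY, set $w(V)$ to GREEDY's window on $e_V$ (a \emph{same-edge} charge); otherwise, using the equivalence above, when GREEDY scanned $V$ it must have rejected it because $V$ conflicts with some already-selected window of $A$, and I set $w(V)$ to be that conflicting window of smallest exit time (a \emph{conflict} charge), whose existence follows since otherwise $V$ would remain selectable at termination and GREEDY would not have stopped. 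Each window $W\in A$ receives at most one same-edge charge because $O$ uses at most one window per edge. The main obstacle, and the only delicate point, is bounding the conflict charges: all windows $V\in O$ with $w(V)=W$ are pairwise non-conflicting (they belong to $O$), each conflicts with $W$, and each has exit time at least $\mathrm{exit}(W)=f_W$ (since $W$ was selected before $V$ was scanned); I would show this forces every such $V$ to contain the point $f_W$, so that pairwise disjointness permits at most one of them, yielding at most one conflict charge per $W$ and thus the bound of $2$.

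Finally, for tightness I would present a three-vertex instance (one center, two leaves) on which GREEDY attains ratio exactly $2$, matching Figure~\ref{fig:tight_approx}: give edge $X$ the labels $1,2,4,5$ (windows $[1,2]$ and $[4,5]$) and edge $Y$ the labels $1,3$ (window $[1,3]$). Starting at $t=0$, GREEDY selects the globally earliest exit, namely $X$ via $[1,2]$, advances to $t=3$, and then finds $Y$ unselectable since its only window starts at $1<3$, so $|A|=1$. An optimal exploration instead takes $Y$ via $[1,3]$ followed by $X$ via $[4,5]$, giving $|O|=2$; hence GREEDY achieves ratio exactly $2$ on this instance, which shows the analysis above is best possible.
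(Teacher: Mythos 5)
Your proof is correct, and its core argument takes a genuinely different route from the paper's. For the ratio, you recast GREEDY as the earliest-exit-time scan of a job-interval selection problem and charge each optimal window $V$ either to the greedy window on the same edge (at most one such charge per greedy window, since the optimum uses each edge at most once) or to an earlier-selected greedy window $W$ conflicting with $V$; your key step---that such a $V$ satisfies $\mathrm{exit}(V)\geq\mathrm{exit}(W)$ and overlaps $W$, hence contains the point $\mathrm{exit}(W)$, so disjointness of the optimal windows permits at most one conflict charge per $W$---is valid. The paper argues differently: it splits the edges into $G_1(L_1)$ (those GREEDY explores) and $G_2(L_2)$ (the rest), uses $|OPT((G_s,L))|\leq|OPT(G_1(L_1))|+|OPT(G_2(L_2))|$ with $|OPT(G_1(L_1))|=|A((G_s,L))|$, and bounds $|OPT(G_2(L_2))|\leq|A((G_s,L))|$ by observing that any window of a $G_2$-edge whose entry lies between consecutive greedy exit times $l_{j-1}$ and $l_j$ must finish at $l_j$ or later, so each of the $|A((G_s,L))|$ intervals contributes at most one window to an optimum of $G_2(L_2)$. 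Both proofs exploit the same phenomenon (each greedy exit point can block at most one extra optimal window), but yours charges window-by-window and makes the job-interval-selection connection explicit, while the paper counts interval-by-interval over a partition of the edges; your version also supplies two things the paper leaves implicit, namely a worst-case family witnessing the $\Omega(kn^2)$ half of the running-time bound and an explicit tight instance (labels $1,2,4,5$ and $1,3$) in the spirit of Figure~\ref{fig:tight_approx}. One step you should flesh out is the asserted equivalence between the time-advancing greedy and the exit-order scan (equivalently, the claim that a rejected window of an unexplored edge must conflict with an already-selected window): a window lying wholly in a gap between two selected windows conflicts with nothing, and ruling this out needs the exchange observation that GREEDY would have selected it in place of the later window; this is routine but it is where the real content of the equivalence lives. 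A trivial remark: your edge $X$ with labels $1,2,4,5$ also has the window $[2,4]$, which you omit, but this is immaterial since $X$ leaves the candidate set as soon as its first window is chosen.
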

\begin{proof}
	Note that Algorithm~GREEDY is indeed polynomial, running in time $\Theta(kn^2)$, as there will be at most $|E|$ iterations of the while-loop, each of which requires $O(k|E|)$ time; one needs to examine all edges' labels in the worst case to determine whether an edge exists in $\mathcal{C}$ that can be explored starting at time $t$ or greater.
	
	Now, consider an instance $(G_s,L)$ of \MSE. Let $OPT((G_s,L))$ denote an optimal solution to the problem, with $|OPT((G_s,L))|$ being the maximum number of edges of $(G_s,L)$ that can be explored. Let $A((G_s,L))$ denote the exploration that Algorithm~GREEDY returns with $|A((G_s,L))|$ being the number of edges explored in that exploration. We will show that $|OPT((G_s,L))| \leq 2 |A((G_s,L))|$.
	
	Let $G_1(L_1)$ be the edges selected for exploration in $A((G_s,L))$ together with their labels (as assigned to them by $L$). Let $G_2(L_2)$ be the remaining edges together with their labels. $G_1(L_1)$ and $G_2(L_2)$ are both instances of \MSE. We will show that it holds that:
	\begin{equation}\label{eq:opt_greedy}
	|OPT((G_s,L))| \leq |OPT(G_1(L_1))| + |OPT(G_2(L_2))|
	\end{equation}
	Obviously, $|OPT(G_1(L_1))| \leq |A((G_s,L))|$; in fact, $|OPT(G_1(L_1))|=|A((G_s,L))|$ since all edges in $G_1(L_1)$ can be explored with it being derived by the exploration produced by Algorithm~GREEDY. So, equation~\eqref{eq:opt_greedy} becomes:
	\begin{equation}\label{eq:opt_greedy_2}
	|OPT((G_s,L))| \leq |A((G_s,L))| + |OPT(G_2(L_2))|
	\end{equation}
	
	Now, let $l_1<l_2<\ldots < l_{|A((G_s,L))|}$ be the exit times of the exploration windows (of the edges) selected by Algorithm~GREEDY, in ascending order. Let also $l_0 = 0$. It is the case that all exploration windows of all edges in $G_2(L_2))$ that have entry time in the time interval $(l_{j-1}, l_j)$ have finish time equal to $l_j$ or greater than $l_j$, for $j=1,2,\ldots, |A((G_s,L))|$; if not, then they would have been selected by Algorithm~GREEDY. Therefore, at most one of the windows of $G_2(L_2)$ with start time in the time interval $(l_{j-1}, l_j)$ can be in an optimal solution of $G_2(L_2)$. Since there are $|A((G_s,L))|$ such time intervals, we get $|OPT(G_2(L_2))| \leq |A((G_s,L))|$. So, equation~\eqref{eq:opt_greedy_2} becomes:
	\begin{equation*}
	|OPT((G_s,L))| \leq 2 |A((G_s,L))|
	\end{equation*}
	which completes the proof of the first part of the theorem.
	
	To show that the approximation given by Algorithm~GREEDY is tight, consider the instance of \MSE\ shown in Figure~\ref{fig:tight_approx}. Note that the possible exploration windows for each edge are shown below the edge. 
	\begin{figure}
		\centering
		\includegraphics[width=0.4\textwidth]{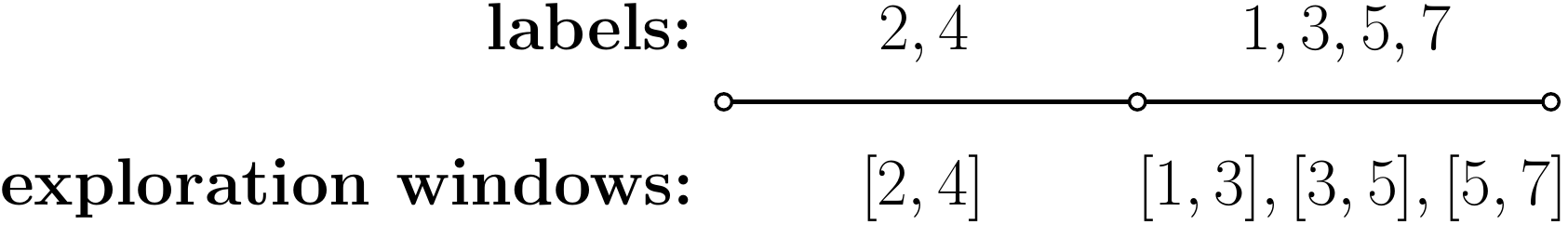}
		\caption{An instance of \MSE where the greedy approximation ratio is tight.}
		\label{fig:tight_approx}
	\end{figure}	
	
	It is easy to see that for this instance $(G_s,L)$, it holds that $|OPT((G_s,L))| = 2$, while $|A((G_s,L))| = 1$.
\end{proof}

\section{$k$ random models per edge}\label{sec:random}

We now study the problem of star exploration in a temporal star graph on an underlying star graph $G_s$ of $n$ vertices, where the labels are assigned to the edges of $G_s$ at random. In particular, each edge of $G_s$ receives $k$ labels independently of other edges, and each label is chosen uniformly at random and independently of others from the set of integers $\{1,2,\ldots, \alpha\}$, for some $\alpha \in \mathbb{N}$. We call this a \emph{uniform random temporal star} and denote it by $G_s(\alpha, k)$. In this section, we investigate the probability of exploring all edges in a uniform random temporal star based on different values of $\alpha$ and $k$, thus partially characterizing uniform random temporal stars that can be \emph{fully} explored or not, asymptotically almost surely.

\begin{theorem}
	If $\alpha \geq 2n$ and $k\geq 6n \ln{n}$, then the probability that we can explore all edges of $G_s(\alpha,k)$ tends to $1$ as $n$ tends to infinity.
\end{theorem}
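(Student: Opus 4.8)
The plan is to reduce the existence of a complete exploration to a simple block-scheduling argument, and then to show that this scheduling succeeds asymptotically almost surely by a union bound. Recall that, by the earliest-exit convention, exploring an edge $e$ amounts to choosing a pair of \emph{consecutive} labels of $e$ as its (entry,exit) window, and that a complete exploration exists exactly when one can pick one such window per edge so that the $n-1$ chosen windows are pairwise disjoint, and hence can be ordered into a single journey that leaves and returns to $c$. First I would partition the time horizon $\{1,2,\ldots,\alpha\}$ into $n-1$ consecutive, pairwise disjoint blocks $B_1 < B_2 < \cdots < B_{n-1}$, each of size $\lfloor \alpha/(n-1)\rfloor$ or $\lceil \alpha/(n-1)\rceil$; since $\alpha \geq 2n$, every block has size at least $2$. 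I would then assign edge $e_i$ to block $B_i$ and aim to explore $e_i$ entirely inside $B_i$.

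The key structural observation is that if $e_i$ has at least two labels inside $B_i$, then its two smallest labels in $B_i$, say $a < a'$, are in fact \emph{consecutive} labels of $e_i$: since $B_i$ is an interval and $a'$ is the second smallest label of $e_i$ in $B_i$, no label of $e_i$ can lie strictly between them. Hence $[a,a']\subseteq B_i$ is a valid exploration window for $e_i$. Because the blocks are disjoint and ordered in time, exploring each $e_i$ inside its own block yields windows with strictly increasing times, which concatenate into a single valid journey that starts and ends at $c$ and visits every leaf. Thus it suffices to prove that, asymptotically almost surely, every edge $e_i$ receives at least two of its labels in its assigned block $B_i$.

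For the probabilistic estimate, fix $i$ and let $X_i$ be the number of labels of $e_i$ falling in $B_i$, so that $X_i \sim \mathrm{Bin}(k, p_i)$ with $p_i = |B_i|/\alpha$. Since $|B_i| \geq \lfloor \alpha/(n-1)\rfloor > \alpha/(n-1) - 1$, we get $p_i > \tfrac{1}{n-1} - \tfrac{1}{\alpha} \geq \tfrac{1}{n} - \tfrac{1}{2n} = \tfrac{1}{2n}$, using $\alpha \geq 2n$ and $\tfrac{1}{n-1} \geq \tfrac1n$; crucially this bound is independent of how large $\alpha$ is, because the blocks scale with $\alpha$. Hence $\mu_i := k p_i \geq 6n\ln n \cdot \tfrac{1}{2n} = 3\ln n$, and a routine binomial tail bound gives $\Pr[X_i \leq 1] = \Pr[X_i=0] + \Pr[X_i=1] \leq e^{-\mu_i} + \mu_i e^{-(\mu_i - 1)} = O(n^{-2}\log n)$. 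A union bound over the $n-1$ edges then shows that the probability that some edge has at most one label in its block is $O(n^{-1}\log n) \to 0$, so a.a.s. every edge has at least two labels in its block and the exploration constructed above exists.

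The step I expect to need the most care is the choice of block widths. Because $\alpha$ is only bounded below by $2n$ and may be arbitrarily large, using fixed width-$2$ windows would make the per-block hit probability $2/\alpha$ vanish and destroy the estimate; scaling the blocks to width $\Theta(\alpha/n)$ is what keeps the per-edge success probability $\Theta(1/n)$ for every admissible $\alpha$. The only remaining subtlety is controlling the integer rounding so that the \emph{smallest} block still yields mean $\mu_i \geq 3\ln n$, and the bound $p_i \geq 1/(2n)$ above is designed precisely to absorb this rounding loss.
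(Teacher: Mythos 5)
There is a genuine gap, and it sits precisely in the regime that the theorem must cover. In the random model of Section~\ref{sec:random}, the $k$ labels of an edge are chosen uniformly and \emph{independently} from $\{1,2,\ldots,\alpha\}$, i.e.\ with replacement, so an edge can receive the same value several times; indeed, when $\alpha = 2n$ and $k \geq 6n\ln n$ we have $k > \alpha$, so repeated values are forced by pigeonhole. Your structural step silently assumes the opposite: from ``$e_i$ has at least two labels inside $B_i$'' you pass to ``its two smallest labels in $B_i$, say $a < a'$,'' but these two labels may be \emph{equal}, and then they do not form a legal (entry,\,exit) pair, since a journey must use strictly increasing labels. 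This is not a measure-zero nuisance: with $\alpha = 2n$ your blocks contain only about two distinct integer values each, while each edge drops about $6\ln n$ labels into its block, so the two smallest labels in a block coincide with high probability. Consequently the event whose probability you bound, $\{X_i \geq 2\}$, does \emph{not} imply that $e_i$ is explorable inside $B_i$; the event you actually need is ``$B_i$ contains at least two \emph{distinct} label values of $e_i$,'' and your binomial tail estimate says nothing about it.

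The repair is to guarantee distinctness structurally rather than probabilistically, and it essentially turns your argument into the paper's. Split each block $B_i$ into a left half and a right half (each of size at least $1$, since $|B_i|\geq 2$) and require at least one label in each half: entering in the left half and exiting in the right half then gives strictly increasing times inside $B_i$. This is exactly what the paper does — it partitions $[1,\alpha]$ into $2n$ disjoint boxes, implicitly assigns two consecutive boxes to each edge (an entry box and an exit box), and needs only \emph{one} label per box, so equal labels can never be paired. Note that after this repair your quantitative claims degrade: the per-half hitting probability is only about $1/(2n)$ (at least roughly $1/(4n)$ after rounding), so the exponent drops to about $\tfrac32 \ln n$ instead of $3\ln n$, and the union bound over the $2(n-1)$ halves gives failure probability $O(n^{-1/2})$ rather than the $O(n^{-1}\log n)$ you state; the theorem still follows, just with a weaker rate. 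The remainder of your proposal — scaling the blocks with $\alpha$, the observation that disjoint ordered windows concatenate into a single journey, and the binomial tail bounds for $\Pr[X_i \le 1]$ — is sound.
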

\begin{proof}
	We consider the time-line from $1$ to $\alpha$ and we split it into $2n$ consecutive equal-sized time-windows of size $\frac{\alpha}{2n}$ as shown in Figure~\ref{fig:boxes}. Let us henceforth refer to those as \emph{boxes} and denote the $i^{th}$ such box by $B_i$. The first box contains the labels $1,2\ldots, \frac{\alpha}{2n}$, the second box contains the labels $\frac{\alpha}{2n}+1,\frac{\alpha}{2n}+2\ldots, \frac{\alpha}{n}$, and so on.
	
	\begin{figure}[h]
		\centering
		\includegraphics[width=0.7\textwidth]{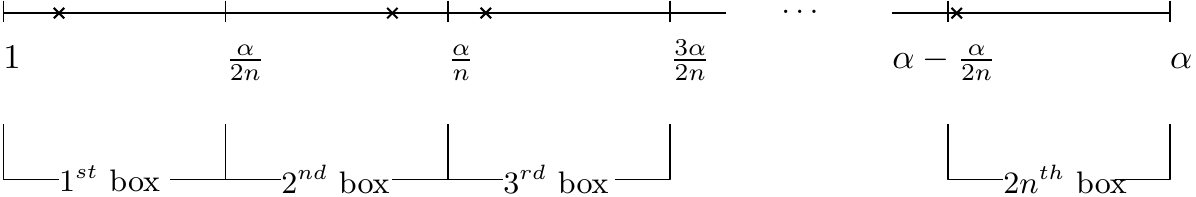}
		\caption{Splitting the time from $1$ to $\alpha$ into $2n$ boxes to show the existence of at least one label per box, for every edge, asymptotically almost surely.}
		\label{fig:boxes}
	\end{figure}
	
	We will show that for every edge of $G_s$, there will be asymptotically almost surely at least one of its labels that falls in the first box, one of its labels that falls in the second box, etc.
	But first, let us note the following:
	\begin{observation*}
		If for every edge $e\in E$ and for every box $B_i$ there is at least one label of $e$ that lies within $B_i$, then there exists an exploration of $G_s(\alpha,k)$.
	\end{observation*}
	\begin{proof}
		Assume that for every edge $e\in E$ and for every box $B_i$ there is at least one label of $e$ that lies within $B_i$. Fix an arbitrary order $e_1,e_2,\ldots,e_{n-1}$ of the edges of $G_s$. Explore $e_1$ using its label that lies within $B_1$ to enter and its label that lies within $B_2$ to exit, explore $e_2$ using its label that lies within $B_3$ to enter and its label that lies within $B_4$ to exit, and so on and so forth.
	\end{proof}
	Note now that for a particular edge $e \in E$ and a particular box $B_i$ of $e$, the probability that $B_i$ contains none of the labels of $e$ is:
	\[Pr[B_i \text{ is empty}] =\left( 1- \frac{\frac{\alpha}{2n}}{\alpha}\right)^k \leq \left( 1- \frac{\frac{\alpha}{2n}}{\alpha}\right)^{6n \ln{n}}  \leq \frac{1}{n^{3}}.\]
	So the probability that there is an empty box of $e$ is:
	\[Pr[\text{there is an empty }B_i \text{ of } e] \leq 2n\cdot\frac{1}{n^3} = \frac{2}{n^2}, \]
	and so the probability that there exists an edge with an empty box is:
	\[Pr[\text{there is an edge with an empty box}] \leq \# \text{edges} \cdot \frac{2}{n^2} \leq \frac{2}{n}.\]
	Finally, the probability that we can explore all edges of $G_s(\alpha,k)$ is:
	\[Pr[\text{exploration}] \geq 1-\frac{2}{n}  \rightarrow 1 \text{, as } n \rightarrow +\infty\]
	The latter completes the proof of the theorem.
\end{proof}

\begin{theorem}
	If $\alpha \geq 4$ and $k=2$, then the probability that we can explore all edges of $G_s(\alpha,k)$ tends to zero as $n$ tends to infinity.
\end{theorem}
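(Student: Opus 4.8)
The plan is to exploit that, with only $k=2$ labels per edge, each edge admits exactly one possible exploration window, so a complete exploration exists if and only if all $n-1$ windows can be scheduled simultaneously. Write the two labels of edge $e$ as $a_e\le b_e$; if $a_e=b_e$ the edge is unexplorable, and otherwise $e$ can only be explored by entering at $a_e$ and exiting at $b_e$, occupying the interval $I_e=[a_e,b_e]$. A complete exploration therefore exists only if the intervals $\{I_e\}_e$ are pairwise disjoint, since they must be traversed one after another in strictly increasing time. First I would record this reduction, so that it suffices to upper bound the probability that all $n-1$ random intervals are pairwise disjoint.

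The key step is a length/packing bound: if the intervals are pairwise disjoint they all lie in $[1,\alpha]$, hence $\sum_e (b_e-a_e)\le \alpha-1<\alpha$. In particular, at most two edges can have $b_e-a_e\ge \alpha/3$, since three such edges would already contribute at least $\alpha$ to the sum. Thus, letting $W=|\{e: b_e-a_e\ge \alpha/3\}|$, a complete exploration forces $W\le 2$, and it remains to show $\Pr[W\le 2]\to 0$.

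For this I would use that the quantities $X_e=b_e-a_e$ are independent across edges (labels are drawn independently per edge) and that each single edge satisfies $\Pr[X_e\ge \alpha/3]\ge q$ for some universal constant $q>0$ valid for all $\alpha\ge 4$: indeed $X_e=|U-V|$ for two independent uniform labels $U,V\in\{1,\dots,\alpha\}$, and $\Pr[|U-V|\ge \alpha/3]\to 4/9$ while the finite-$\alpha$ values stay bounded away from $0$ (the smallest relevant case $\alpha=4$ already gives $3/8$). Hence $W$ stochastically dominates a $\mathrm{Bin}(n-1,q)$ random variable, whose mean $q(n-1)$ tends to infinity; a standard Chernoff lower-tail estimate then yields $\Pr[W\le 2]\le \Pr[\mathrm{Bin}(n-1,q)\le 2]\to 0$, which completes the argument.

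The main obstacle I anticipate is making the per-edge estimate $\Pr[X_e\ge \alpha/3]\ge q$ genuinely uniform in $\alpha$, i.e.\ controlling the discreteness for the small values of $\alpha$ near $4$ rather than relying only on the continuous limit. Everything else---the reduction to disjoint intervals, the packing bound forcing $W\le 2$, and the concentration of the binomial---is routine. A convenient by-product is that the bound $\sum_e(b_e-a_e)\le \alpha-1$ also subsumes the degenerate regime $\alpha<2(n-1)$, where a complete exploration is outright impossible, so no separate case analysis is needed.
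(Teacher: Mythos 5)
Your proof is correct, and it takes a genuinely different route from the paper's. The paper argues via pairwise conflicts: it calls two edges with labels $a_1\leq a_2$ and $b_1\leq b_2$ a \emph{blocking pair} when their windows overlap (including shared endpoints), shows that any two fixed edges are blocking with probability at least $\frac{1}{6}\cdot\frac{3}{32}=\frac{1}{64}$ (conditioning on the four labels being distinct, which for $\alpha\geq 4$ has probability at least $\frac{3}{32}$), and then partitions the $n-1$ edges into $\lfloor\frac{n-1}{2}\rfloor$ disjoint pairs so that these events are independent; explorability forces every pair to be non-blocking, giving $\Pr[\text{exploration}]\leq\left(\frac{63}{64}\right)^{\lfloor (n-1)/2\rfloor}\to 0$. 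You instead use a global packing constraint: pairwise disjoint windows in $[1,\alpha]$ have total length at most $\alpha-1$, so at most two explored edges can have window length at least $\alpha/3$, while each edge independently has a long window with probability bounded below by a universal constant, so the number of long windows dominates a binomial with linear mean whose lower tail at $2$ vanishes. The ``obstacle'' you flag is not actually one: writing $t=\lceil\alpha/3\rceil$, one has $\Pr[|U-V|\geq t]=(\alpha-t)(\alpha-t+1)/\alpha^2\geq \frac{4}{9}-\frac{2}{9\alpha}-\frac{2}{9\alpha^2}$, which is increasing in $\alpha$ and equals $\frac{3}{8}$ at $\alpha=4$, so $q=\frac{3}{8}$ works uniformly for all $\alpha\geq 4$. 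As for what each approach buys: the paper's proof is slightly more self-contained (no binomial tail estimate or stochastic domination needed) and gives a clean explicit bound, but your packing idea may generalize better. The paper's conclusion explicitly notes that the blocking-pair technique ``cannot be easily extended to large $k$,'' whereas for general $k$ every explored edge still consumes an interval whose length is at least the minimum gap between consecutive labels of that edge; lower-bounding these minimum gaps (typically of order $\alpha/k^2$) and summing yields a necessary packing condition that could plausibly rule out exploration asymptotically almost surely whenever $k$ grows slowly relative to $\sqrt{n}$, which would make progress on the open range $2<k<6n\ln n$ identified in the paper.
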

\begin{proof}
	We introduce the following definition, needed for the proof.
	\begin{definition}
		Let $e_1,e_2$ be two edges of a uniform random temporal star $G_s(\alpha,2)$, $\alpha \geq 2$. Let the labels of $e_1$ be $a_1,a_2$, with $a_1\leq a_2$. Let the labels of $e_2$ be $b_1,b_2$, with $b_1\leq b_2$. We say that $e_1,e_2$ are a \emph{blocking pair} (with respect to exploration in $G_s(\alpha,2)$) if $a_1 \leq b_1 \leq a_2 \leq b_2$, or $a_1 \leq b_1 \leq b_2 \leq a_2$, or $b_1  \leq a_1 \leq b_2 \leq a_2$, or $b_1 \leq a_1 \leq a_2 \leq b_2$.
	\end{definition}
	
	Consider two particular edges $e_1,e_2$ of $G_s(\alpha,2)$, $\alpha\geq 4$. 
	Let $\mathcal{E}$ be the event that $e_1,e_2$ are a blocking pair and $\mathcal{E'}$ be the event that $e_1,e_2$ have $4$ distinct labels in total. Then, the probability that $e_1,e_2$ are a blocking pair is:
	\begin{equation}\label{eq:block1}
	Pr[\mathcal{E}] = Pr[\mathcal{E}~|~\mathcal{E'}] \cdot Pr[\mathcal{E'}] + Pr[\mathcal{E}~|~\bar{\mathcal{E'}}] \cdot Pr[\bar{\mathcal{E'}}] \geq Pr[\mathcal{E}~|~\mathcal{E'}] \cdot Pr[\mathcal{E'}],
	\end{equation}
	where $\bar{\mathcal{E'}}$ denotes the complement of $\mathcal{E'}$.
	Note that if all labels of $e_1,e_2$ are distinct, then the probability that $e_1,e_2$ are a blocking pair is exactly the ratio of the ``good'' arrangements of the $4$ distinct labels, i.e.~those where $a_1 < b_1 < a_2 < b_2$, or $a_1 < b_1 < b_2 < a_2$, or $b_1  < a_1 < b_2 < a_2$, or $b_1 < a_1 < a_2 < b_2$, over the total number of possible arrangements of the $4$ distinct labels. So, equation~\ref{eq:block1} becomes:
	\begin{equation}\label{eq:block2}
	Pr[\mathcal{E}] \geq \frac{4}{4!} \cdot Pr[\mathcal{E'}] = \frac{1}{6} \cdot Pr[\mathcal{E'}] .
	\end{equation}
	
	Now, the probability that all $4$ labels of $e_1,e_2$ are distinct is:
	
	\begin{equation}\label{eq:block3}
	Pr[\mathcal{E'}] = \left( 1- \frac{1}{\alpha} \right) \cdot \left( 1- \frac{2}{\alpha} \right) \cdot \left( 1- \frac{3}{\alpha} \right) \geq \frac{3}{4} \cdot \frac{2}{4} \cdot \frac{1}{4} = \frac{3}{32}.
	\end{equation}
	
	
	Therefore, by equation~\ref{eq:block3}, equation~\ref{eq:block2} becomes $Pr[\mathcal{E}] \geq \displaystyle \frac{1}{64}$.
	
	So, we have:
	
	\[Pr[e_1,e_2 \text{ are not a blocking pair}] \leq \frac{63}{64}.\]
	Let us now arbitrarily group all edges of $G_s(\alpha,2)$ into $\lfloor \frac{n-1}{2} \rfloor$ independent pairs (with the possibility of an edge remaining unpaired). If there is an exploration in $G_s(\alpha,2)$, then there are no blocking pairs of edges in any such pairing and, thus, in the particular pairing we have chosen. So, the probability that we can explore all edges is:
	\begin{eqnarray*}
		Pr[\text{exploration}] &\leq& Pr[\text{no blocking pair exists in the group}]\\
							   &\leq& \left( \frac{63}{64} \right)^{\lfloor \frac{n-1}{2} \rfloor} \rightarrow 0 \text{, as } n \rightarrow +\infty
	\end{eqnarray*} 	
\end{proof}


Figure~\ref{fig:random_param} shows the current state of what is known for the explorability of $G_s(\alpha, k)$ depending on the values of $\alpha$ and $k$.
\begin{figure}[h]
	\centering
	\includegraphics[width=0.5\textwidth]{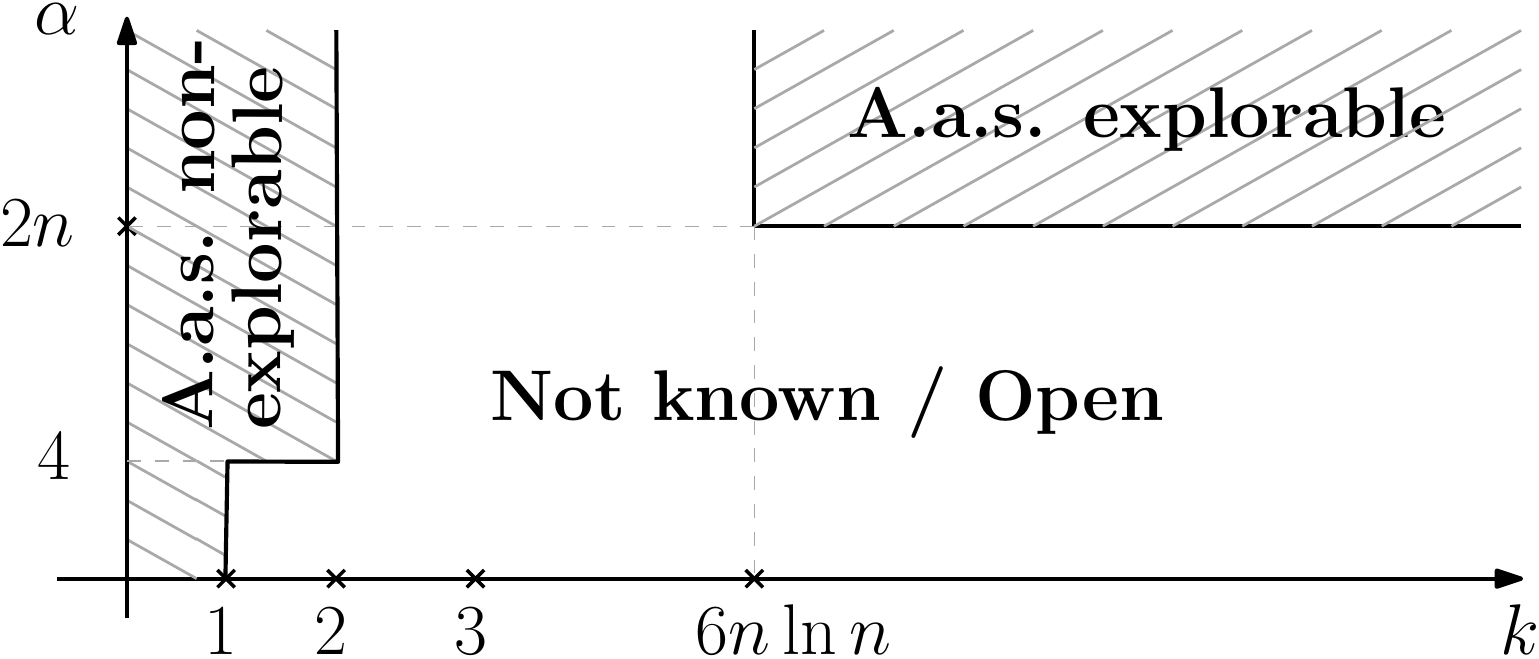}
	\caption{The shaded areas of the chart indicate the pairs $(\alpha,k)$ for which $G_s(\alpha,k)$ 
is asymptotically almost surely (a.a.s.) explorable and non-explorable, respectively.}
	\label{fig:random_param}
\end{figure}

\section{Conclusions and open problems}

In this paper, we have thoroughly investigated the computational complexity landscape of the temporal star exploration problems \SE\ and \MSE, depending on the maximum number $k$ of labels allowed per edge. 

We have shown that an optimal solution to the maximization problem~\textsc{MaxStarExp(2)}, on instances every edge of which has two labels per edge, can be efficiently found in $O(n\log n)$ time. This immediately implies that the decision version,~\textsc{StarExp(2)}, can be also solved in the same time.
We have proven that \textsc{StarExp(3)} can be solved in $O(n\log n)$ time, by carefully reducing it to instances of \textsc{2SAT} with number of clauses that is linear on the number of variables. This requires a more sophisticated analysis than what is needed to reduce \textsc{StarExp(3)} to an arbitrary \textsc{2SAT} instance; the latter would solve \textsc{StarExp(3)} in $O(n^2)$ time. 
For every $k\geq 6$, we show that \SE\ is NP-complete and \MSE\ is APX-complete. Indeed, we also give a greedy 2-approximation algorithm for \MSE.
Finally, we study the problem of exploring uniform random temporal stars whose edges have $k$ random labels (chosen uniformly at random within an interval $[1,\alpha]$, for some $\alpha\in\mathbb{N}$). We partially characterize the classes of uniform random temporal stars which, asymptotically almost surely, admit a complete (resp.~admit no complete) exploration. In particular, the ``blocking pairs'' technique used to show that there is asymptotically almost surely no complete exploration for $k=2$ and $\alpha \geq 4$ cannot be easily extended to large $k$. So, it remains open to determine the explorability of uniform random temporal stars for values of $k$ between $2$ and $6n\ln{n}$.

We pose here a question regarding the complexity of the maximization problem \textsc{MaxStarExp(3)}, which remains an open problem, as well as the complexity of \SE\ and \MSE, for $k\in \{4,5\}$. An interesting variation of \SE\ and \MSE\ is the case where the consecutive labels of every edge are $\lambda$ time steps apart, for some $\lambda \in \mathbb{N}$. What is the complexity and/or best approximation factor one may hope for in this case?

\bibliographystyle{plainurl}
\bibliography{new_bib}

\end{document}